\documentclass[journal]{IEEEtran}
\usepackage{amsmath,amsfonts}
\usepackage{array}
\usepackage[caption=false,font=scriptsize,labelfont=rm,textfont=rm]{subfig}
\usepackage{textcomp}
\usepackage{stfloats}
\usepackage{threeparttable}
\usepackage{url}
\usepackage{verbatim}
\usepackage{graphicx}
\def\BibTeX{{\rm B\kern-.05em{\sc i\kern-.025em b}\kern-.08em
		T\kern-.1667em\lower.7ex\hbox{E}\kern-.125emX}}
\usepackage{balance}

\usepackage{multirow}
\usepackage{cite}
\usepackage{cases}
\usepackage{amssymb}
\usepackage{algorithm}
\usepackage{algpseudocode}
\usepackage{bm}
\usepackage{color,xcolor}

\usepackage{float}    
\usepackage{subfig}   
\usepackage{overpic}  

\usepackage{amsthm}

\newtheorem{Lemma}{Lemma}
\newtheorem{mypro}{Proposition}
\allowdisplaybreaks[3]
\begin{document}

	\title{Outage-Constrained Environment Division Multiple Access (EDMA) for Pinching-Antenna  Systems}
	
	\author{Weihao Mao,~\IEEEmembership{Student Member,~IEEE},~Yang Lu,~\IEEEmembership{Senior Member,~IEEE},~Yanqing Xu,~\IEEEmembership{Senior Member,~IEEE}, \\ and Zhiguo Ding,~\IEEEmembership{Fellow,~IEEE}
		\thanks{Weihao Mao and Yang Lu are with the School of Computer Science and Technology, Beijing Jiaotong University, Beijing 100044, China (e-mail: weihaomao,yanglu@bjtu.edu.cn).}
        \thanks{Yanqing Xu  is with the School of Science and Engineering, The Chinese University of Hong Kong, Shenzhen, China. (e-mail: xuyanqing@cuhk.edu.cn)}
        \thanks{Zhiguo Ding is with the School of Electrical and Electronic Engineering (EEE), Nanyang Technological University, Singapore 639798 (Zhiguo.ding@ntu.edu.sg).}
	}

	\maketitle

	
	\begin{abstract}
    Environment division multiple access (EDMA) has emerged as a promising multiple access paradigm, which mitigates inter-user interference by dynamically adjusting pinching antenna (PA) positions to the underlying propagation environment. This paper investigates a multi-user PA-enabled EDMA framework that accounts for probabilistic line-of-sight (LoS) blockages, random non-LoS (NLoS) scattering, and practical in-waveguide attenuation. With the aim of maximizing the total information rate, we formulate a joint PA deployment and power allocation problem subject to statistical rate outage constraints, the transmit power budget, and the feasible deployment region of PAs. We first consider a canonical two-user two-PA scenario and derive closed-form expressions for the outage probabilities, followed by a low-complexity projected gradient descent (PGD)-based algorithm to address the reformulated problem. Then, we extend our design to the general multi-user multi-PA scenario and derive tractable approximations for the outage probabilities by assuming interfering links to be NLoS and applying the Chernoff bounding technique, where a successive convex approximation (SCA)-based algorithm is proposed to handle the resulting non-convex problem. Simulations validate the superiority of the proposed PA-enabled EDMA design and the effectiveness of the proposed algorithms. Specifically, both the PGD-based algorithm and the SCA-based algorithm achieve near-optimal performance in comparison with the exhaustive search. Furthermore, the PA-enabled EDMA design yields significant performance gains over both PA-enabled and conventional time division multiple access designs. 
    \end{abstract}

    \begin{IEEEkeywords} 
    Environment division multiple access (EDMA), pinching antenna (PA), probabilistic LoS blockage, outage-constrained optimization 
	\end{IEEEkeywords}
	
	\IEEEpeerreviewmaketitle
	
	\setlength{\parindent}{1em}
	\section{Introduction}

    Sixth generation (6G) mobile communication networks impose stringent requirements for ultra-reliable and low-latency communications \cite{bc_6g}. Nevertheless, the explosive proliferation of intelligent devices and the escalating demands for high data rates severely conflict with the scarce spectrum resources \cite{bc_spectrum}. To address this bottleneck, emerging flexible-antenna technologies have been proposed to improve spectral efficiency by dynamically reconfiguring channel conditions \cite{bc_flex}. Specifically, fluid antennas (FAs) \cite{bc_fa} and movable antennas (MAs) \cite{bc_ma} enable spatial channel customization via software-controlled radio-frequency (RF) switches and mechanical drivers, respectively. However, positional adjustments of FAs and MAs are typically restricted to the wavelength scale \cite{bc_ma2}, which limits their performance in complicated environments \cite{bc_pa1}. To overcome this confinement, pinching antennas (PAs) have recently emerged as a novel architecture \cite{bc_pa2}. By leveraging dedicated dielectric waveguides, PAs allow antennas to be dynamically activated at any point within their deployed waveguides, which supports large-scale antenna deployment \cite{bc_pa3}. 

    Compared with FAs and MAs, PAs can mitigate the large-scale path loss \cite{bc_pa_pl} and sustain line-of-sight (LoS) communication links \cite{bc_pa_los}, which has attracted significant attention from academia. Initial studies focused on single-waveguide PA architectures. For instance, \cite{rw_pa_sing_w1} derived the rate region for PA systems, demonstrating their superiority over conventional fixed-position antennas. Furthermore, \cite{rw_pa_sing_w2} integrated non-orthogonal multiple access (NOMA) techniques into PA systems and investigated the rate maximization problem in a two-user scenario. To further exploit the spatial degrees of freedom, subsequent research considered multi-waveguide systems. Specifically, \cite{rw_pa_mul_w1} jointly optimized the pinching beamforming and power allocation to maximize the sum rate, while \cite{rw_pa_mul_w2} investigated the joint design of antenna activation and resource allocation in a NOMA-assisted PA system. Beyond pure communication systems, recent works have explored the integration of diverse network functionalities into PA systems. For example, \cite{rw_pa_mul_f1} proposed a fine-tuning-based optimization method to maximize the communication rate subject to the radar signal-to-noise ratio (SNR) requirement in PA-enabled integrated sensing and communication (ISAC) systems. Additionally, \cite{rw_pa_mul_f2} incorporated mobile edge computing (MEC) into PA systems and considered a delay minimization problem by jointly optimizing the PA positions and resource allocation. 

    The existing literature \cite{rw_pa_sing_w1, rw_pa_sing_w2, rw_pa_mul_w1, rw_pa_mul_w2, rw_pa_mul_f1, rw_pa_mul_f2} relies on the assumption that LoS links always exist between PAs and users, which, however, rarely holds in practical obstacle-dense scenarios \cite{bc_pro_pa1, bc_pro_los1, bc_pro_los2}. In reality, environmental blockages are not necessarily detrimental and can be employed by PAs to mitigate inter-user interference via intentionally suppressing the interference links into non-LoS (NLoS) states \cite{bc_pro_pa2}. This insight motivates the emergence of \emph{environment division multiple access} (EDMA), a novel framework that leverages LoS blockages and severe path loss as spatial resources to facilitate quasi-interference-free multiple access and information transmission \cite{bc_pro_pa3}. Existing studies on PA-enabled EDMA can be generally classified into two categories according to the prior knowledge of LoS links: deterministic LoS channel-based designs and probabilistic LoS channel-based designs. For the former, \cite{rw_deter_pa1} and \cite{rw_deter_pa2} investigated the PA deployment strategies in environments characterized by cylindrical obstacles and rectangular cuboid obstacles, respectively. 
    For the latter, \cite{los_block} analyzed and maximized the ergodic rate of PA-enabled EDMA systems under the probabilistic LoS channel model. The results from \cite{rw_deter_pa1,rw_deter_pa2,los_block} demonstrate the advantages of EDMA over conventional multiple access techniques such as time division multiple access (TDMA). However, existing studies fail to guarantee transmission reliability since they neglect the adverse impact of link blockages on the communication outage performance.


    \emph{So far, the outage-constrained design for PA-enabled EDMA systems remains unexplored in the open literature.} Despite its importance for ultra-reliable communications, realizing such a design is highly non-trivial due to the difficulties in deriving a tractable form for the rate outage probability. \emph{To fill this gap,} this paper investigates the joint design of PA deployment and power allocation for outage-constrained PA-enabled EDMA systems, where both probabilistic LoS blockages and random NLoS scattering are taken into account. Furthermore, the practical in-waveguide attenuation is also considered to guarantee physical-layer fidelity. The main contributions of this paper are summarized as follows:

    \begin{itemize}
        \item 
        We propose a multi-user PA-enabled EDMA framework, where distributed PAs are coordinated by a base station (BS) to facilitate concurrent transmissions. Moreover, the impact of the probabilistic LoS blockages, random NLoS scattering, and in-waveguide attenuation is considered in the system model. We formulate a total information rate maximization problem by jointly optimizing the PA deployment, power allocation, and information rate, subject to statistical rate outage constraints, the transmit power budget, and the feasible deployment region of PAs.
        \item 
        To obtain analytical insights, we first consider a two-user two-PA scenario, where tractable closed-form expressions for the outage probabilities are derived. Building upon this, we reformulate the optimization problem by decoupling variables leveraging optimality analysis, followed by a projected gradient descent (PGD)-based algorithm to handle the reformulated problem. To further reduce the computational complexity, we derive the closed-form optimal solution for each projection subproblem based on the Karush-Kuhn-Tucker (KKT) conditions. 
        \item
        To extend our design to the general multi-user multi-PA scenario, we derive tractable approximations for the outage probabilities by simplifying the interfering links to be NLoS and applying the Chernoff bounding technique. On this basis, we introduce a series of auxiliary variables to decouple the optimization variables, followed by a successive convex approximation (SCA)-based algorithm to tackle the reformulated problem. 
        \item 
        Extensive simulation results validate the superiority of the proposed PA-enabled EDMA framework and the effectiveness of the proposed PGD-based and SCA-based algorithms. The derived analytical and approximated outage probabilities tightly match the Monte Carlo results, rigorously justifying our analytical assumptions. Moreover, the proposed PGD-based and SCA-based algorithms exhibit near-optimal performance in comparison with the exhaustive search. Furthermore, the PA-enabled EDMA design yields substantial performance gains over both PA-enabled TDMA and conventional TDMA designs.
    \end{itemize}
    
   {\it Notations:} In this paper, the symbols $x$, ${\bf x}$, ${\bf X}$, and ${\cal X}$ represent a scalar, a vector, a matrix, and a set, respectively. $|| \cdot ||$ denotes the Euclidean norm for a vector. ${\bf a} \sim \mathcal{CN}({\bm \mu}, {\bm \Sigma})$ denotes that ${\bf a}$ is a circularly symmetric complex Gaussian random vector with mean ${\bm \mu}$ and covariance matrix ${\bm \Sigma}$. $\mathbb{P}(\cdot)$ and $\mathbb{E}(\cdot)$ denote the probability operator and the statistical expectation operator, respectively.

	\section{System Model and Problem Formulation}
	\begin{figure}
		\begin{center}
			\centerline{\includegraphics[ width=.46\textwidth]{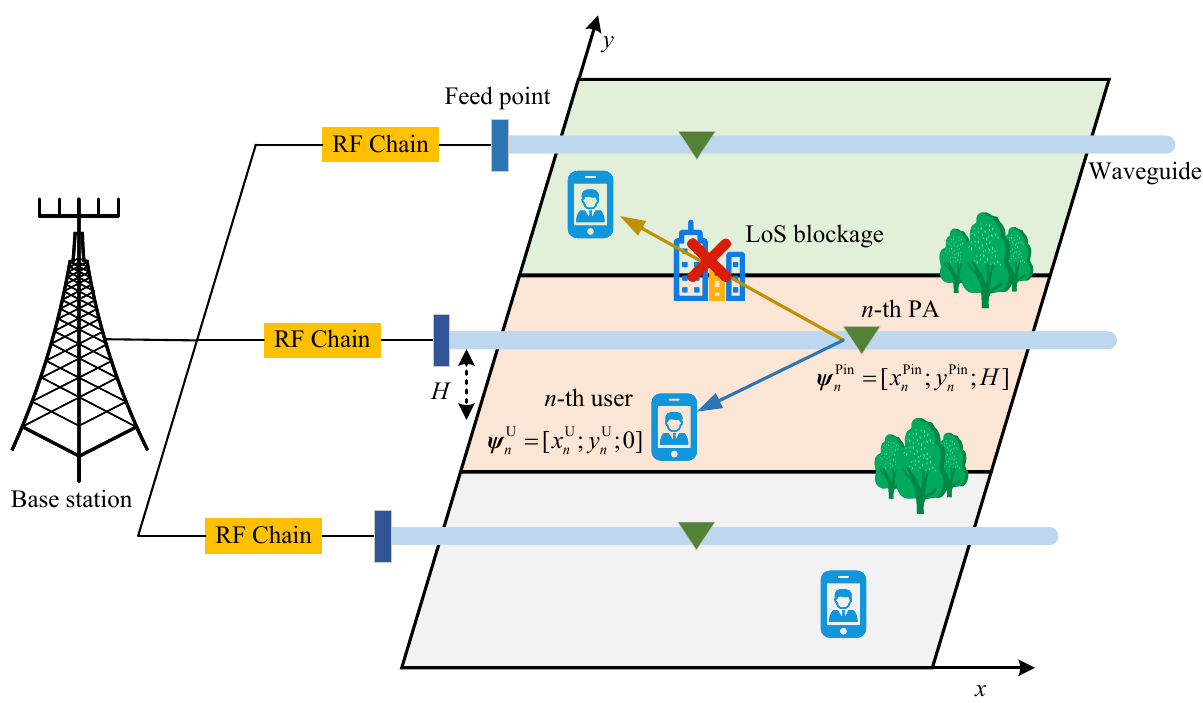}}
			\caption{Illustration of a PA-enabled EDMA system.}
			\label{sys}
            \vspace{-2em}
		\end{center}
	\end{figure}

In this paper, we consider a PA-enabled EDMA system as illustrated in Fig. \ref{sys}, where a BS coordinates $N$ PAs to deliver information services to $N$ users. Each PA is deployed on a dedicated dielectric waveguide connected to an independent RF chain through its corresponding feed point, which enables flexible and independent signal transmission and reception across different waveguides. By exploiting the parallelly deployed waveguides, the entire service region is partitioned into $N$ non-overlapping sub-regions, where the $n$-th sub-region is denoted by ${\cal  A}_n$ ($n\in\{1,2,\dots,N\}$).  Without loss of generality, we assume that within each sub-region, only one user is scheduled and only one PA is implemented. To further alleviate the signal processing complexity, we adopt a one-to-one association between users and PAs, i.e., each user is served exactly by the PA located in the same sub-region and vice versa. This assumption is motivated by the fact that dense building obstacles can severely block the LoS links between users and PAs located in distinct sub-regions \cite{los_block}. For notational brevity, the user residing in sub-region ${\cal A}_n$ is denoted by ${\rm U}_n$, and its corresponding serving PA is denoted by ${\rm Pin}_n$.

The service region is assumed to be a rectangle, with its length and width denoted by $D_{\rm L}$ and $D_{\rm W}$, respectively. For illustrative purposes, all sub-regions are assumed to have identical sizes, i.e., each sub-region takes the shape of a rectangle with size $D_{\rm L} \times {D_{\rm W}}/{N}$. A three-dimensional (3D) Cartesian coordinate system is adopted to characterize the positions of users and PAs. Under this coordinate system, the coordinates of ${\rm U}_m$ and ${\rm Pin}_n$ are given by ${\bm \psi}_m^{\rm U} \triangleq [x_m^{\rm U};y_m^{\rm U};0]$ and ${\bm \psi}_n^{\rm Pin} \triangleq [x_n^{\rm Pin}; y_n^{\rm Pin}; H]$, respectively, where $H$ denotes the deployment height of dielectric waveguides.

\subsection{Channel Model}

	\begin{figure}
		\begin{center}
			\centerline{\includegraphics[ width=.46\textwidth]{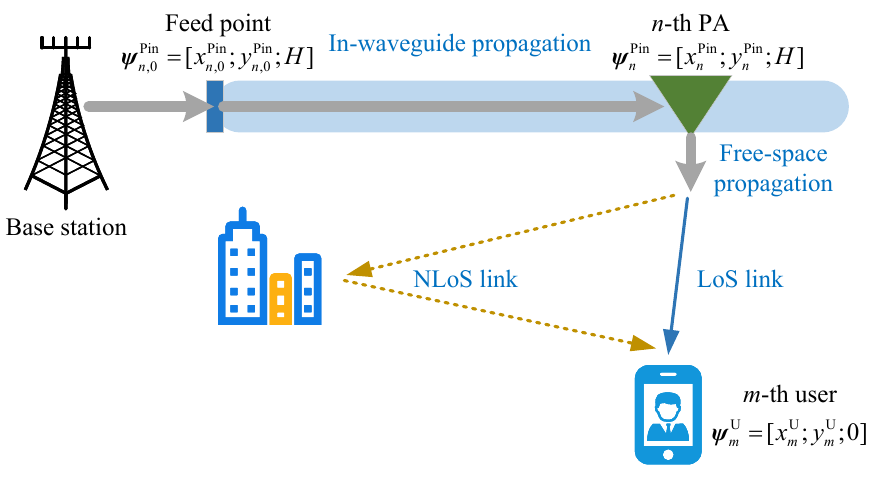}}
			\caption{Illustration of the channel between ${\rm Pin}_n$ and ${\rm U}_m$.}
			\label{pinch_channel}
            \vspace{-2em}
		\end{center}
	\end{figure}

As illustrated in Fig. \ref{pinch_channel}, distinct from conventional wireless communication systems, the signal in PA-enabled systems experiences in-waveguide attenuation and free-space fading. We denote the channel associated with ${\rm Pin}_n$ and ${\rm U}_{ m}$ by $h_{n, m}$, which is expressed as
\begin{flalign}
    h_{n, m} = \underbrace{\frac{ e^{- j \frac{2 \pi}{\lambda} n_{\rm eff} \left| \left| {\bm \psi}_n^{\rm Pin} -{\bm \psi}_{n, 0}^{\rm Pin}    \right| \right|} }{ e^{\alpha \left| \left| {\bm \psi}_n^{\rm Pin} - {\bm \psi}_{n, 0}^{\rm Pin}  \right| \right| } }}_{\rm \text{in-waveguide~attenuation}}
    \times
    \underbrace{\frac{\sqrt{\eta}}{\left| \left| {\bm \psi}_{n}^{\rm Pin} - {\bm \psi}_m^{\rm U} \right| \right| }  g_{n, m}}_{\rm \text{free-space~fading}} ,
\end{flalign}
where $\lambda$ denotes the wavelength of the carrier frequency, $n_{\rm eff}$ denotes the effective refractive index of a dielectric waveguide \cite{neff}, ${\bm \psi}_{n, 0}^{\rm Pin} \triangleq [ x_{n, 0}^{\rm Pin}; y_{n, 0}^{\rm Pin}; H]$ denotes the coordinate of the feed point of the $n$-th waveguide, $\alpha$ denotes the in-waveguide attenuation coefficient determined by the waveguide structure and frequency \cite{in_wave_att}, $\eta \triangleq (\lambda / (4 \pi))^2$ is a frequency-dependent constant \cite{eta_para}, and $g_{n, m}$ denotes the small-scale fading component associated with $h_{n, m}$.



Furthermore, we consider a general and practical scenario where the LoS link between a PA and a user may be subject to blockage. Under a probabilistic LoS channel model \cite{p_LoS_channel}, the small-scale fading component $g_{n, m}$ is given by
\begin{flalign}
    g_{n, m} = \zeta_{n, m} g_{n, m}^{\rm LoS} + (1 - \zeta_{n, m}) \kappa g_{n, m}^{\rm NLoS},
\end{flalign}
where $\zeta_{n, m} \in \{0, 1\}$ is a Bernoulli random variable with $\zeta_{n, m} = 1$ indicating the existence of an LoS link, $0< \kappa < 1$ denotes the additional signal attenuation factor incurred by NLoS propagation, $g_{n, m}^{\rm LoS} \triangleq e^{- j  \frac{2 \pi}{\lambda} ||{\bm \psi}_n^{\rm Pin} - {\bm \psi}_m^{\rm U} || }$ denotes the LoS component of $g_{n, m}$, and $g_{n, m}^{\rm NLoS} \sim \mathcal{CN}(0, 1)$ denotes the Rayleigh fading component corresponding to NLoS propagation. Based on \cite{p_LoS_channel2}, the probability of the LoS link existence is given by
\begin{flalign}
    \mathbb{P} ( \zeta_{n, m} = 1 ) = e^{- \beta || {\bm \psi}_n^{\rm Pin} - {\bm \psi}_m^{\rm U} ||^2 }, \label{los_prob}
\end{flalign}
where $\beta \leq 1$ is a small positive parameter that characterizes the density of blockages in the environment. 

\textit{Remark 1:} Equation \eqref{los_prob} reveals that the LoS availability probability is dominated by the distance between ${\rm Pin}_n$ and ${\rm U}_m$. This property highlights a key advantage of PA-enabled systems, namely, the capability to flexibly realize antenna activation and deployment on a large scale. For instance, ${\rm Pin}_n$ can be deployed in close proximity to its served ${\rm U}_n$ to maintain a high-quality LoS link $h_{n, n}$, while being sufficiently separated from other users (e.g., ${\rm U}_m$) such that the interference link $h_{n, m}$ ($m \neq n$) is prone to blockage, thereby suppressing the interference inflicted on ${\rm U}_m$ by information signals associated with ${\rm U}_n$. This constitutes the core principle of the considered EDMA design, i.e., \emph{leveraging the inherent blockages in the propagation environment to mitigate inter-user interference}.

\subsection{Downlink Communication Model}

The downlink signal received by ${\rm U}_n$ is given by
\begin{flalign}
    y_n^{\rm d} = \underbrace{\sqrt{p_n} h_{n, n} s_n}_{\text{desired signal}} + \underbrace{\sum\nolimits_{m \neq n}^N \sqrt{p_m} h_{m, n} s_m}_{\text{inter-user interference}} + n_n,
\end{flalign}
where $p_n \geq 0$ denotes the transmit power allocated to ${\rm U}_n$, $s_n \sim \mathcal{CN}(0, 1)$ denotes the information symbol intended for ${\rm U}_n$, and $n_n \sim \mathcal{CN}(0, \sigma_n^2)$ denotes the additive white Gaussian noise (AWGN) at ${\rm U}_n$ with $\sigma_n^2$ representing the corresponding noise power. 

Then, the signal-to-interference-plus-noise ratio (SINR) at ${\rm U}_n$ is formulated as 
\begin{flalign}
    {\Gamma}_n = \frac{p_n |h_{n,n}|^2}{ \sum_{m \neq n}^N p_m |h_{m, n}|^2 + \sigma_n^2 }. \label{sinr}
\end{flalign}
It is worth noting that $\Gamma_n$ in equation \eqref{sinr} exhibits inherent randomness, which stems from the probabilistic LoS blockage characterized by the Bernoulli random variable $\zeta_{m, n}$ and the random Rayleigh fading of the NLoS component $g_{m, n}^{\rm NLoS}$.

Due to the probabilistic LoS blockages and random NLoS scattering, the instantaneous channel gains are inherently stochastic and highly dynamic. To ensure the reliability of the considered system, an outage probability constraint is introduced to account for the aforementioned uncertainty, which is given by
\begin{flalign}
    \mathbb{P} \left( \log_2\left( 1 + \Gamma_n \right) \leq R_n  \right) \leq \epsilon,
\end{flalign}
where $R_n$ denotes the target information rate to be optimized for ${\rm U}_n$, and $\epsilon \in (0, 1)$ denotes the maximum tolerable outage probability.

\subsection{Problem Formulation}

In this paper, we aim to maximize the total information rate via the joint optimization of PA deployment $\{x_n^{\rm Pin}\}$, transmit power allocation $\{p_n\}$, and information rate $\{R_n\}$ under the constraints of the statistical rate outage, the total transmit power budget, and the feasible deployment region of PAs. Specifically, the considered joint optimization problem is formulated as 
\begin{subequations}
	\begin{flalign}
		{{\bf P}_0}: & \mathop{\max}  \limits_{\{ x_n^{\rm Pin}, p_n, R_n \} } \sum\nolimits_{n=1}^N R_n   
			  \label{p0a}   \\
		{\rm s.t.} ~ & \mathbb{P} \left( \log_2 \left( 1 + \Gamma_n  \right) \leq R_n \right) \leq \epsilon,  \forall n \in \{1, 2, \ldots, N\}, \label{p0b}
       \\
        & \sum\nolimits_{n = 1}^N p_n \leq P_{\max}, p_n \geq 0, \forall n \in \{1, 2, \ldots, N\},
        \label{p0c} \\
        &  0 \leq x_n^{\rm Pin} \leq D_{\rm L}, \forall n \in \{1, 2, \ldots, N\},  \label{p0d}
	\end{flalign}
\end{subequations}
where $P_{\max}$ denotes the total transmit power budget.

Problem ${\bf P}_0$ is non-convex owing to the intricate coupling among the PA deployment positions $\{ x_n^{\rm Pin} \}$, transmit power allocation $\{p_n\}$, and information rate $\{R_n\}$. Furthermore, no explicit closed-form expression is available for the rate outage probability in constraint \eqref{p0b}, which further exacerbates the difficulty of solving the formulated problem.

\section{Performance Analysis for the Two-User Two-PA Case}

In this section, we investigate a special case of the PA-enabled EDMA system consisting of only two users and two PAs. For this special case, a tractable closed-form expression can be derived for the outage probability in constraint \eqref{p0b}. Then, we decouple the optimization variables within the constraints via rigorous optimality analysis and propose a PGD-based algorithm to address the reformulated problem.

\subsection{Outage Probability Analysis}

Recall that the outage probability in \eqref{p0b} is governed by two sources of randomness, i.e., the binary Bernoulli-distributed LoS blockages and the Rayleigh fading inherent in the NLoS propagation channels. 

We first analyze the LoS blockages characterized by $\zeta_{m, n}$. For the considered two-user two-PA special case, by exhaustively enumerating all possible LoS/NLoS channel states, the outage probability for ${\rm U}_1$ is explicitly expressed as
\begin{flalign}
    & 
    \mathbb{P} \left( \log_2 \left( 1 + \Gamma_1 \right) \leq R_1  \right) = \mathbb{P}\left( \zeta_{1, 1} = 1 \right) \mathbb{P} \left( \zeta_{2, 1} = 1 \right) \nonumber \\
    & \times
    \mathbb{P}\left( \Gamma_1 \leq 2^{R_1} - 1 | \{ \zeta_{1,1},  \zeta_{2, 1} \} = \{1, 1\}  \right) + \mathbb{P} \left( \zeta_{1,1} = 1 \right)  \nonumber  \\
    & \times
    \mathbb{P} \left( \zeta_{2, 1} = 0 \right) \mathbb{P} \left( \Gamma_1 \leq 2^{R_1} - 1 | \{ \zeta_{1, 1}, \zeta_{2, 1} \} = \{1, 0\} \right) \nonumber \\
    & + 
    \mathbb{P}\left( \zeta_{1, 1} = 0 \right) \mathbb{P} \left( \zeta_{2, 1} = 1 \right) \nonumber \\
    & \times
    \mathbb{P} \left( \Gamma_1 \leq 2^{R_1} - 1 | \{ \zeta_{1,1}, \zeta_{2, 1} \} = \{0, 1\}  \right) + \mathbb{P} \left( \zeta_{1, 1} = 0 \right) \nonumber \\
    & \times 
    \mathbb{P} \left( \zeta_{2, 1} = 0 \right) \mathbb{P} \left( \Gamma_1 \leq 2^{R_1} - 1 | \{ \zeta_{1, 1}, \zeta_{2, 1} \} = \{0, 0\} \right).
    \label{2_pa_prob_1}
\end{flalign}
After some algebraic manipulation, \eqref{2_pa_prob_1} can be rewritten as \eqref{2_pa_prob_2}, 
\begin{figure*}
\begin{flalign}
     \mathbb{P} \left( \log_2 \left( 1 + \Gamma_1 \right) \leq R_1  \right) = &e^{-\beta || {\bm \psi}_1^{\rm Pin} - {\bm \psi}_1^{\rm U}||^2 }  e^{ - \beta || {\bm \psi}_2^{\rm Pin} - {\bm \psi}_1^{\rm U}  ||^2 } F_1 \left( p_1, p_2, x_1^{\rm Pin}, x_2^{\rm Pin}, R_1  \right) \nonumber \\
    & + e^{ - \beta || {\bm \psi}_1^{\rm Pin} - {\bm \psi}_1^{\rm U}||^2 } \left( 1 - e^{ -\beta || {\bm \psi}_2^{\rm Pin} - {\bm \psi}_1^{\rm U}  ||^2 }  \right)  \mathbb{P} \left( \frac{ \eta p_1  e^{-2 \alpha ||{\bm \psi}_1^{\rm Pin} - {\bm \psi}_{1, 0}^{\rm Pin} || } }{||{\bm \psi}_1^{\rm Pin} - {\bm \psi}_1^{\rm U}  ||^2} - \left( 2^{R_1} - 1 \right) \sigma_1^2 \leq   z_{2, 1}  \right)  \nonumber \\
    & + \left( 1 - e^{ - \beta || {\bm \psi}_1^{\rm Pin} - {\bm \psi}_1^{\rm U}  ||^2  }  \right) e^{ - \beta || {\bm \psi}_2^{\rm Pin} - {\bm \psi}_1^{\rm U}  ||^2 }  \mathbb{P} \left( z_{1, 1} \leq \left( 2^{R_1} - 1 \right) \left( \frac{ \eta p_2  e^{-2 \alpha ||{\bm \psi}_2^{\rm Pin} - {\bm \psi}_{2, 0}^{\rm Pin} || } }{||{\bm \psi}_2^{\rm Pin} - {\bm \psi}_1^{\rm U}  ||^2}   + \sigma_1^2    \right)  \right) \nonumber \\
    & + \left( 1 - e^{-\beta || {\bm \psi}_1^{\rm Pin} - {\bm \psi}_1^{\rm U}  ||^2 }  \right) \left(  1 -  e^{-\beta || {\bm \psi}_2^{\rm Pin} - {\bm \psi}_1^{\rm U}   ||^2} \right) \mathbb{P} \left( z_{1, 1} \leq z_{2, 1} + \left( 2^{R_1} - 1 \right)\sigma_1^2  \right)
    \label{2_pa_prob_2}
\end{flalign}
\hrule
\end{figure*}
where $F_1(p_1, p_2, x_1^{\rm Pin}, x_2^{\rm Pin}, R_1)$ denotes the indicator function, and is given by
\begin{flalign}
    & F_1 \left( p_1, p_2, x_1^{\rm Pin}, x_2^{\rm Pin}, R_1  \right) \triangleq \nonumber \\
    & \begin{cases}
        1, & \text{if}~ \frac{\eta p_1 e^{ - 2 \alpha \left| \left| {\bm \psi}_1^{\rm Pin} - {\bm \psi}_{1, 0}^{\rm Pin}  \right| \right| } }{\left| \left| {\bm \psi}_{1}^{\rm Pin} - {\bm \psi}_1^{\rm U} \right| \right|^2 }  \leq  \\
        &  ~~~\left( 2^{R_1} - 1   \right)   \left( \frac{ \eta p_2 e^{-2 \alpha ||{\bm \psi}_2^{\rm Pin} - {\bm \psi}_{2, 0}^{\rm Pin} || } }{||{\bm \psi}_2^{\rm Pin} - {\bm \psi}_1^{\rm U}  ||^2} + \sigma_1^2   \right),    \\ 
        0, & \text{Otherwise}.
    \end{cases}
\end{flalign}
Moreover, $z_{1, 1}$ and $z_{2, 1}$ are introduced random variables, which are respectively defined by
\begin{flalign}
    &
    z_{1, 1} \triangleq \frac{ \eta p_1 \kappa^2  }{||{\bm \psi}_1^{\rm Pin} - {\bm \psi}_1^{\rm U}  ||^2 e^{2 \alpha ||{\bm \psi}_1^{\rm Pin} - {\bm \psi}_{1, 0}^{\rm Pin} || }}  \left|  g_{1, 1}^{\rm NLoS}    \right|^2,
    \\
    &
    z_{2, 1} \triangleq \frac{ \eta p_2 \kappa^2 \left( 2^{R_1} - 1 \right) }{||{\bm \psi}_2^{\rm Pin} - {\bm \psi}_1^{\rm U}  ||^2 e^{2 \alpha ||{\bm \psi}_2^{\rm Pin} - {\bm \psi}_{2, 0}^{\rm Pin} || }}  \left| g_{2, 1}^{\rm NLoS}   \right|^2. 
\end{flalign}

Next, we address the random NLoS Rayleigh fading characterized by $g_{m, n}^{\rm NLoS}$. By exploiting the fact that a central chi-square random variable with two degrees of freedom is equivalent to an exponential distribution, we arrive at
\begin{flalign}
    & z_{1, 1} \sim {\rm Exp} \left( \frac{||{\bm \psi}_1^{\rm Pin} - {\bm \psi}_1^{\rm U}  ||^2 e^{2 \alpha ||{\bm \psi}_1^{\rm Pin} - {\bm \psi}_{1, 0}^{\rm Pin} || }}{  \eta p_1 \kappa^2 }  \right),
    \\
    & z_{2, 1} \sim {\rm Exp} \left( \frac{||{\bm \psi}_2^{\rm Pin} - {\bm \psi}_1^{\rm U}  ||^2 e^{2 \alpha ||{\bm \psi}_2^{\rm Pin} - {\bm \psi}_{2, 0}^{\rm Pin} || }}{\eta p_2 \kappa^2 \left( 2^{R_1} - 1 \right) }  \right).
\end{flalign}


By utilizing the cumulative distribution function (CDF) of the exponential distribution, we obtain
\begin{flalign}
    & \mathbb{P} \left( c_{2, 1}  \leq   z_{2, 1}  \right) = e^{ - \lambda_{2, 1} \max \left( c_{2, 1} , 0\right) }, \label{close_2} \\
    &  \mathbb{P} \left( z_{1, 1} \leq c_{1, 1}  \right) = 1 - e^{-\lambda_{1, 1} c_{1, 1} }, \label{close_3}
\end{flalign}
where 
\begin{flalign}
    & 
    \lambda_{1, 1}  \triangleq \frac{||{\bm \psi}_1^{\rm Pin} - {\bm \psi}_1^{\rm U}  ||^2 e^{2 \alpha ||{\bm \psi}_1^{\rm Pin} - {\bm \psi}_{1, 0}^{\rm Pin} || }}{  \eta p_1 \kappa^2 }, \\
    &
    c_{1, 1}  \triangleq \left( 2^{R_1} - 1 \right) \left( \frac{ \eta p_2  e^{-2 \alpha ||{\bm \psi}_2^{\rm Pin} - {\bm \psi}_{2, 0}^{\rm Pin} || } }{||{\bm \psi}_2^{\rm Pin} - {\bm \psi}_1^{\rm U}  ||^2}   + \sigma_1^2    \right), \\
    &
    \lambda_{2, 1} \triangleq 
    \frac{||{\bm \psi}_2^{\rm Pin} - {\bm \psi}_1^{\rm U}  ||^2 e^{2 \alpha ||{\bm \psi}_2^{\rm Pin} - {\bm \psi}_{2, 0}^{\rm Pin} || }}{\eta p_2 \kappa^2 \left( 2^{R_1} - 1 \right) },
    \\ &
    c_{2, 1} \triangleq \frac{ \eta p_1  e^{-2 \alpha ||{\bm \psi}_1^{\rm Pin} - {\bm \psi}_{1, 0}^{\rm Pin} || } }{||{\bm \psi}_1^{\rm Pin} - {\bm \psi}_1^{\rm U}  ||^2} - \left( 2^{R_1} - 1 \right) \sigma_1^2.
\end{flalign}

To derive a closed-form expression for the outage probability $\mathbb{P}( z_{1, 1} \leq z_{2, 1} + ( 2^{R_1} - 1 )\sigma_1^2 )$, the following lemma is introduced.

\begin{Lemma}
\label{exp_dis_2}
    Let $x_1 \sim {\rm Exp}\left( \lambda_1 \right)$ and $x_2 \sim {\rm Exp} \left( \lambda_2 \right)$ be independent exponential random variables with rate parameters $\lambda_1$ and $\lambda_2$, respectively. For any constant $c \geq 0$, the following probability identity holds:
    \begin{flalign}
        \mathbb{P} \left(  x_1 \leq x_2 +c \right) = 
         1   -   \frac{\lambda_2}{ \lambda_2 + \lambda_1}e^{ -\lambda_1 c }.
    \end{flalign}
\end{Lemma}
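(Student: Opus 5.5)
We will prove the identity by conditioning on $x_2$ and integrating against its density, using independence of $x_1$ and $x_2$. Conditioned on $x_2 = t \geq 0$, the event $\{x_1 \leq x_2 + c\}$ becomes $\{x_1 \leq t + c\}$. Since $c \geq 0$, the threshold $t+c$ is nonnegative, so the exponential CDF applies directly without any truncation at zero:
\begin{flalign}
\mathbb{P}\left(x_1 \leq t + c\right) = 1 - e^{-\lambda_1 (t+c)}.
\end{flalign}
This is precisely where the hypothesis $c \geq 0$ is used. For negative $c$ one would have to split the integral at $t = -c$, which is the situation handled separately by the $\max(\cdot,0)$ in \eqref{close_2}.

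Next we apply the law of total probability with the density $\lambda_2 e^{-\lambda_2 t}$ of $x_2$ on $[0,\infty)$:
\begin{flalign}
\mathbb{P}\left(x_1 \leq x_2 + c\right) = \int_0^{\infty} \lambda_2 e^{-\lambda_2 t}\left(1 - e^{-\lambda_1 (t+c)}\right) {\rm d}t.
\end{flalign}
The integral splits into two terms. The first is $\int_0^\infty \lambda_2 e^{-\lambda_2 t}\,{\rm d}t = 1$. In the second, the factor $e^{-\lambda_1 c}$ comes out of the integral, and what remains is $\lambda_2 \int_0^\infty e^{-(\lambda_1+\lambda_2)t}\,{\rm d}t = \frac{\lambda_2}{\lambda_1+\lambda_2}$. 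This integral converges because $\lambda_1,\lambda_2>0$. Combining the two terms gives $1 - \frac{\lambda_2}{\lambda_1+\lambda_2} e^{-\lambda_1 c}$, which is the claimed identity.

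There is no real obstacle here. The only point needing care is justifying that the conditional CDF takes the unclipped form for every $t \geq 0$, which follows from $c \geq 0$. As a sanity check, at $c=0$ the formula reduces to the standard result $\mathbb{P}(x_1 \leq x_2) = \frac{\lambda_1}{\lambda_1+\lambda_2}$.
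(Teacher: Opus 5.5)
Your proof is correct and takes the same route as the paper. The paper writes the probability as a double integral of the joint density, evaluates the inner integral over $x_1 \in [0, x_2 + c]$ to get $1 - e^{-\lambda_1 (x_2 + c)}$, and then integrates against $\lambda_2 e^{-\lambda_2 x_2}$; this is your conditioning-plus-total-probability computation, and your explicit remark that $c \geq 0$ keeps the upper limit nonnegative is a point the paper leaves implicit.
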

\begin{IEEEproof}
   Using the joint probability density function (PDF) of independent exponential random variables, we have
    \begin{flalign}
         & \mathbb{P} \left(  x_1 \leq x_2 +c \right) =  \int_{x_2 = 0}^{\infty} \int_{x_1 = 0}^{x_2 + c} \lambda_1 \lambda_2 e^{ - \lambda_1 x_1} e^{ - \lambda_2 x_2}  {\rm d} x_1 {\rm d} x_2 \nonumber \\
         & =  
        \int_{x_2 = 0}^{\infty} \lambda_2  e^{ - \lambda_2 x_2} \left(  \int_{x_1 = 0}^{x_2 + c} \lambda_1  e^{ - \lambda_1 x_1}  {\rm d} x_1 \right) {\rm d} x_2
         \nonumber \\
         &
         = \int_{x_2 = 0}^{\infty} \lambda_2  e^{ - \lambda_2 x_2} \left( 1 - e^{-\lambda_1 \left( x_2 + c \right) }
        \right) {\rm d} x_2
        \nonumber \\
        &
         = 1 - \frac{\lambda_2}{ \lambda_2 + \lambda_1 } e^{- \lambda_1 c}, 
    \end{flalign}
    which concludes the proof.
\end{IEEEproof}

By applying Lemma \ref{exp_dis_2}, we can obtain that
\begin{flalign}
    & \mathbb{P} \left( z_{1, 1} \leq z_{2, 1} + \left( 2^{R_1} - 1 \right)\sigma_1^2  \right) \nonumber \\
    &~~~~= 1 - \frac{\lambda_{2, 1}}{ \lambda_{2, 1} + \lambda_{1, 1} } e^{ - \lambda_{1, 1} \left( 2^{R_1} - 1 \right)\sigma_1^2  }. \label{close_4}
\end{flalign}

Combining \eqref{close_2}, \eqref{close_3}, and \eqref{close_4}, we derive a closed-form expression for $\mathbb{P}(\log_2 (1 + \Gamma_1) \leq R_1 )$, which is denoted by $\Phi_1 (p_1, p_2, x_1^{\rm Pin}, x_2^{\rm Pin}, R_1)$ and presented in \eqref{closed_u1}. Following a similar derivation procedure, the closed-form expression for $\mathbb{P}( \log_2(1+\Gamma_2) \leq R_2)$ can be established, which is denoted by $\Phi_2(p_1, p_2, x_1^{\rm Pin}, x_2^{\rm Pin}, R_2)$ and given in \eqref{closed_u2},
\begin{figure*}
\begin{flalign}
    & \Phi_1 \left( p_1, p_2, x_1^{\rm Pin}, x_2^{\rm Pin}, R_1 \right) = e^{-\beta || {\bm \psi}_1^{\rm Pin} - {\bm \psi}_1^{\rm U}||^2 }  e^{ - \beta || {\bm \psi}_2^{\rm Pin} - {\bm \psi}_1^{\rm U}  ||^2 } F_1 \left( p_1, p_2, x_1^{\rm Pin}, x_2^{\rm Pin}, R_1  \right) \nonumber \\
    & + e^{ - \beta || {\bm \psi}_1^{\rm Pin} - {\bm \psi}_1^{\rm U}||^2 } \left( 1 - e^{ -\beta || {\bm \psi}_2^{\rm Pin} - {\bm \psi}_1^{\rm U}  ||^2 }  \right)  e^{ - \lambda_{2, 1} \max \left( c_{2, 1}, 0\right) }   + \left( 1 - e^{ - \beta || {\bm \psi}_1^{\rm Pin} - {\bm \psi}_1^{\rm U}  ||^2  }  \right) e^{ - \beta || {\bm \psi}_2^{\rm Pin} - {\bm \psi}_1^{\rm U}  ||^2 } \left( 1 - e^{-\lambda_{1, 1} c_{1, 1} } \right) \nonumber \\
    & + \left( 1 - e^{-\beta || {\bm \psi}_1^{\rm Pin} - {\bm \psi}_1^{\rm U}  ||^2 }  \right) \left(  1 -  e^{-\beta || {\bm \psi}_2^{\rm Pin} - {\bm \psi}_1^{\rm U}   ||^2} \right)\left( 1 - \frac{\lambda_{2, 1}}{ \lambda_{2, 1} + \lambda_{1, 1} } e^{ - \lambda_{1, 1} \left( 2^{R_1} - 1 \right)\sigma_1^2  }  \right) \label{closed_u1} \\
     &
     \Phi_2\left(p_1, p_2, x_1^{\rm Pin}, x_2^{\rm Pin}, R_2\right) = 
   e^{- \beta || {\bm \psi}_1^{\rm Pin} - {\bm \psi}_2^{\rm U}  ||^2 } e^{-\beta || {\bm \psi}_2^{\rm Pin} - {\bm \psi}_2^{\rm U} ||^2 } F_2 \left( p_1, p_2, x_1^{\rm Pin}, x_2^{\rm Pin}, R_2  \right) \nonumber \\
   & + 
   e^{- \beta || {\bm \psi}_1^{\rm Pin} - {\bm \psi}_2^{\rm U}||^2 }
   \left( 1 - e^{-\beta || {\bm \psi}_2^{\rm Pin} - {\bm \psi}_2^{\rm U}||^2 } \right) \left( 1 - e^{ -\lambda_{2, 2} c_{2, 2} }  \right)   +
    \left( 1 - e^{ - \beta || {\bm \psi}_1^{\rm Pin} - {\bm \psi}_2^{\rm U}  ||^2 }  \right)
    e^{ - \beta || {\bm \psi}_2^{\rm Pin} - {\bm \psi}_2^{\rm U} ||^2 } e^{-\lambda_{1, 2} \max \left(c_{1, 2} , 0 \right)} \nonumber \\
    &  + 
    \left( 1 - e^{-\beta || {\bm \psi}_1^{\rm Pin} - {\bm \psi}_2^{\rm U}||^2}  \right)  \left( 1 - e^{ - \beta || {\bm \psi}_2^{\rm Pin} - {\bm \psi}_2^{\rm U} ||^2 }  \right)
    \left( 1 - \frac{\lambda_{1, 2}}{\lambda_{1, 2} + \lambda_{2, 2}} e^{- \lambda_{2, 2} \left(2^{R_2} - 1\right) \sigma_2^2 }
    \right)  \label{closed_u2}
\end{flalign}    
\hrule
\end{figure*}
where
\begin{flalign}
    & F_2 \left( p_1, p_2, x_1^{\rm Pin}, x_2^{\rm Pin}, R_2 \right)  \triangleq \nonumber \\
    & \begin{cases}
        1,  & \text{if}~ \frac{ \eta p_2 e^{-2 \alpha ||{\bm \psi}_2^{\rm Pin} - {\bm \psi}_{2, 0}^{\rm Pin} || } }{||{\bm \psi}_2^{\rm Pin} - {\bm \psi}_2^{\rm U}  ||^2} \\
         &~~
         \leq \left( 2^{R_2} - 1 \right) \left(  \frac{ \eta p_1 e^{-2 \alpha ||{\bm \psi}_1^{\rm Pin} - {\bm \psi}_{1, 0}^{\rm Pin} || } }{||{\bm \psi}_1^{\rm Pin} - {\bm \psi}_2^{\rm U}  ||^2}   + \sigma_2^2 \right), \\
        0,  & \text{Otherwise} , 
    \end{cases}  \\
    &
    \lambda_{1, 2}  \triangleq
    \frac{e^{2 \alpha ||{\bm \psi}_1^{\rm Pin} - {\bm \psi}_{1, 0}^{\rm Pin} || } ||{\bm \psi}_1^{\rm Pin} - {\bm \psi}_2^{\rm U}  ||^2}{ \eta p_1 \kappa^2 \left( 2^{R_2} - 1 \right)  },  \\
    &
    c_{1, 2} \triangleq
    \frac{ \eta p_2  e^{-2 \alpha ||{\bm \psi}_2^{\rm Pin} - {\bm \psi}_{2, 0}^{\rm Pin} || } }{||{\bm \psi}_2^{\rm Pin} - {\bm \psi}_2^{\rm U}  ||^2} - \left( 2^{R_2} - 1 \right) \sigma_2^2, \\
    &
    \lambda_{2, 2} \triangleq
    \frac{e^{2 \alpha ||{\bm \psi}_2^{\rm Pin} - {\bm \psi}_{2, 0}^{\rm Pin} || } ||{\bm \psi}_2^{\rm Pin} - {\bm \psi}_2^{\rm U}  ||^2 }{ \eta p_2 \kappa^2}, \\
    & 
    c_{2, 2} \triangleq
    \left( 2^{R_2} - 1 \right) \left(  \frac{ \eta p_1 e^{-2 \alpha ||{\bm \psi}_1^{\rm Pin} - {\bm \psi}_{1, 0}^{\rm Pin} || } }{||{\bm \psi}_1^{\rm Pin} - {\bm \psi}_2^{\rm U}  ||^2} + \sigma_2^2   \right).
\end{flalign}

Based on the preceding outage probability analysis, the considered optimization problem for the two-user two-PA scenario can be equivalently reformulated as 
\begin{subequations}
	\begin{flalign}
		{{\bf P}_1}: & \mathop{\max}  \limits_{\{ x_1^{\rm Pin}, x_2^{\rm Pin}, p_1, p_2, R_1, R_2 \} } R_1 + R_2  
			     \\
		{\rm s.t.} ~ & \Phi_1 \left( p_1, p_2, x_1^{\rm Pin}, x_2^{\rm Pin}, R_1 \right) \leq \epsilon, \label{outage_u1}  \  
       \\
       &  \Phi_2\left(p_1, p_2, x_1^{\rm Pin}, x_2^{\rm Pin}, R_2\right)  \leq \epsilon, \label{outage_u2}
       \\
        & p_1 + p_2 \leq P_{\max}, p_1 \geq 0, p_2 \geq 0,\label{P1:d}
         \\
        &  0 \leq x_1^{\rm Pin} \leq D_{\rm L},  0 \leq x_2^{\rm Pin} \leq D_{\rm L}.\label{P1:e}
	\end{flalign}
\end{subequations}

\subsection{PGD-based Algorithm}

It is observed that the functions $\Phi_1 ( p_1, p_2, x_1^{\rm Pin}, x_2^{\rm Pin}, R_1 )$ and $\Phi_2 ( p_1, p_2, x_1^{\rm Pin}, x_2^{\rm Pin}, R_2 )$ are monotonically increasing with respect to $R_1$ and $R_2$, respectively. Accordingly, the outage constraints \eqref{outage_u1} and \eqref{outage_u2} hold with equality at the optimal solution, which implies that
\begin{flalign}
    \Phi_1 & \left( p_1, p_2, x_1^{\rm Pin}, x_2^{\rm Pin}, R_1 \right) = \epsilon  \nonumber  \\
    & \implies R_1 = \Psi_1 \left( p_1, p_2, x_1^{\rm Pin}, x_2^{\rm Pin} \right), \\
    \Phi_2 & \left(p_1, p_2, x_1^{\rm Pin}, x_2^{\rm Pin}, R_2\right) =  \epsilon \nonumber \\
    & \implies R_2 = \Psi_2 \left( p_1, p_2, x_1^{\rm Pin}, x_2^{\rm Pin}  \right),
\end{flalign}
where $\Psi_1( \cdot )$ and $\Psi_2(\cdot)$ represent the corresponding implicit functions derived from \eqref{outage_u1} and \eqref{outage_u2}, respectively.

Then, Problem ${\bf P}_1$ can be equivalently reformulated as
\begin{subequations}
	\begin{flalign}
		{{\bf P}_2}: & \mathop{\max}  \limits_{\{ x_1^{\rm Pin}, x_2^{\rm Pin}, p_1, p_2 \} } \Psi_1 \left( p_1, p_2, x_1^{\rm Pin}, x_2^{\rm Pin} \right) \nonumber \\
        &~~~~~~~~~~~~~~~
        + \Psi_2 \left( p_1, p_2, x_1^{\rm Pin}, x_2^{\rm Pin} \right) 
			     \\
		{\rm s.t.} ~&\eqref{P1:d},\eqref{P1:e}.
	\end{flalign}
\end{subequations}
Although Problem ${\bf P}_2$ involves a highly non-convex and complicated objective function, its constraint set exhibits a tractable geometric structure. This asymmetric difficulty between the objective function and the constraints naturally motivates the adoption of a PGD-based algorithm. By leveraging PGD, we can efficiently handle the non-convexity through gradient updates, while exploiting the simplicity of the feasible region to perform computationally inexpensive projections.

To implement the PGD-based algorithm, we first need to compute the gradients of $\Psi_1(\cdot)$ and $\Psi_2(\cdot)$ with respect to the optimization variables. Let $x \in \{p_1, p_2, x_1^{\rm Pin}, x_2^{\rm Pin}\}$, by applying the implicit function theorem \cite{implicit}, we can derive the following expressions\footnote{Note that due to the non-smooth nature of the indicator functions $F_1(\cdot)$ and $F_2(\cdot)$ within $\Psi_1(\cdot)$ and $\Psi_2(\cdot)$, conventional gradients do not exist at these non-differentiable points. Therefore, the standard gradient calculations are rigorously replaced by their subgradient counterparts.}:
\begin{flalign}
     & \frac{\partial \Psi_1 \left( p_1, p_2, x_1^{\rm Pin}, x_2^{\rm Pin} \right)} {\partial x} = - \frac{\frac{\partial  \Phi_1  \left( p_1, p_2, x_1^{\rm Pin}, x_2^{\rm Pin}, R_1 \right)} {\partial x}}   { \frac{\partial  \Phi_1  \left( p_1, p_2, x_1^{\rm Pin}, x_2^{\rm Pin}, R_1 \right)} {\partial R_1} },  \\
     & \frac{\partial \Psi_2 \left(p_1, p_2, x_1^{\rm Pin}, x_2^{\rm Pin}  \right)}{\partial x} = - \frac{\frac{\partial \Phi_2 \left(p_1, p_2, x_1^{\rm Pin}, x_2^{\rm Pin}, R_2 \right) }{\partial x}}{ \frac{\partial \Phi_2 \left( p_1, p_2, x_1^{\rm Pin}, x_2^{\rm Pin}, R_2 \right) }{ \partial R_2}   }.
\end{flalign}

The proposed PGD-based algorithm is executed in an iterative manner. In the $i$-th iteration, let $\{ \widetilde{x}_{1, i}^{\rm Pin}, \widetilde{x}_{2, i}^{\rm Pin}, \widetilde{p}_{1, i}, \widetilde{p}_{2, i} \}$ denote the corresponding iteration point of Problem ${\bf P}_2$. Then, the gradient descent update is performed as
\begin{flalign}
    & \bar{x}_{1, i}^{\rm Pin} = \widetilde{x}_{1, i}^{\rm Pin}  + \gamma \left. \frac{\partial {\Psi}_1 \left( p_1, p_2, x_1^{\rm Pin}, x_2^{\rm Pin} \right) } {\partial x_1^{\rm Pin}} \right|_{\{ \widetilde{p}_{1, i}, \widetilde{p}_{2, i}, \widetilde{x}_{1, i}^{\rm Pin}, \widetilde{x}_{2, i}^{\rm Pin} \}} \nonumber \\
    & ~~~~  + \gamma \left. \frac{ \partial \Psi_2 \left( p_1, p_2, x_1^{\rm Pin}, x_2^{\rm Pin} \right) }{ \partial x_1^{\rm Pin} } \right|_{\{ \widetilde{p}_{1, i}, \widetilde{p}_{2, i}, \widetilde{x}_{1, i}^{\rm Pin}, \widetilde{x}_{2, i}^{\rm Pin} \}}, \label{x1_gd} \\
    & \bar{x}_{2, i}^{\rm Pin} = \widetilde{x}_{2, i}^{\rm Pin} + \left. \gamma \frac{\partial \Psi_1 \left( p_1, p_2, x_1^{\rm Pin}, x_2^{\rm Pin} \right)  }{ \partial x_2^{\rm Pin}} \right|_{\{ \widetilde{p}_{1, i}, \widetilde{p}_{2, i}, \widetilde{x}_{1, i}^{\rm Pin}, \widetilde{x}_{2, i}^{\rm Pin} \} }  \nonumber  \\
    & ~~~~ + \gamma \left.  \frac{\partial \Psi_2 \left( p_1, p_2, x_1^{\rm Pin}, x_2^{\rm Pin} \right) }{ \partial x_2^{\rm Pin}} \right|_{\{ \widetilde{p}_{1, i}, \widetilde{p}_{2, i}, \widetilde{x}_{1, i}^{\rm Pin}, \widetilde{x}_{2, i}^{\rm Pin} \}}, \\
    & \bar{p}_{1, i} = \widetilde{p}_{1, i} + \left. \gamma \frac{\partial \Psi_1 \left( p_1, p_2, x_1^{\rm Pin}, x_2^{\rm Pin} \right)}{\partial p_1} \right|_{\{\widetilde{p}_{1, i}, \widetilde{p}_{2, i}, \widetilde{x}_{1, i}^{\rm Pin}, \widetilde{x}_{2, i}^{\rm Pin}\}}   \nonumber \\
    & ~~~~ + \gamma \left. \frac{\partial \Psi_2 \left( p_1, p_2, x_1^{\rm Pin}, x_2^{\rm Pin} \right) }{\partial p_1} \right|_{\{ \widetilde{p}_{1, i}, \widetilde{p}_{2, i}, \widetilde{x}_{1, i}^{\rm Pin}, \widetilde{x}_{2, i}^{\rm Pin}  \} } ,  \\
    & \bar{p}_{2, i} = \widetilde{p}_{2, i} + \gamma \left. \frac{\partial \Psi_1 \left(p_1, p_2, x_1^{\rm Pin}, x_2^{\rm Pin} \right) }{\partial p_2} \right|_{\{ \widetilde{p}_{1, i}, \widetilde{p}_{2, i}, \widetilde{x}_{1, i}^{\rm Pin}, \widetilde{x}_{2, i}^{\rm Pin} \}}  \nonumber \\
    & ~~~~ + \gamma \left. \frac{\partial \Psi_2 \left( p_1, p_2, x_1^{\rm Pin}, x_2^{\rm Pin} \right)}{\partial p_2} \right|_{\{\widetilde{p}_{1, i}, \widetilde{p}_{2, i}, \widetilde{x}_{1, i}^{\rm Pin}, \widetilde{x}_{2, i}^{\rm Pin}  \}},  \label{p2_gd}
\end{flalign}
where $\gamma$ denotes the step size.

After completing the gradient descent step, we project the intermediate iterate $\{ \bar{x}_{1, i}^{\rm Pin}, \bar{x}_{2, i}^{\rm Pin}, \bar{p}_{1, i}, \bar{p}_{2, i} \}$ onto the feasible region of Problem ${\bf P}_2$ to ensure that all constraints are strictly satisfied. Specifically, we seek a feasible point $\{x_1^{\rm Pin}, x_2^{\rm Pin}, p_1, p_2  \}$ within the feasible set that lies closest to the intermediate iterate point. This projection task can be formulated as two independent optimization problems, which are respectively given by
\begin{subequations}
\begin{flalign}
    {\bf P}_3 : & \mathop{\min}  \limits_{ \{ x_1^{\rm Pin}, x_2^{\rm Pin} \} }  \left( x_1^{\rm Pin} - \bar{x}_{1, i}^{\rm Pin} \right)^2 + \left( x_2^{\rm Pin} - \bar{x}_{2, i}^{\rm Pin} \right)^2  \\
    {\rm s.t.}~& \eqref{P1:e}.
\end{flalign}
\end{subequations}
\begin{subequations}
\begin{flalign}
    {\bf P}_4 : & \mathop{\min}  \limits_{ \{ p_1, p_2 \} } \left(  p_1 - \bar{p}_{1, i} \right)^2 + \left( p_2 - \bar{p}_{2, i}  \right)^2    \\
    {\rm s.t.}~& \eqref{P1:d}.
\end{flalign}
\end{subequations}

We begin by solving Problem ${\bf P}_3$. By exploiting the unimodal property of the quadratic function, the optimal solution to Problem ${\bf P}_3$, denoted by $\{ \widetilde{x}_{1, i+1}^{\rm Pin}, \widetilde{x}_{2, i+1}^{\rm Pin} \}$, is given by
\begin{flalign}
    \widetilde{x}_{1, i+1}^{\rm Pin} = \begin{cases}
        0, & \bar{x}_{1, i}^{\rm Pin} < 0 \\
        \bar{x}_{1, i}^{\rm Pin}, & 0 \leq \bar{x}_{1, i}^{\rm Pin} \leq D_{\rm L}   \\
        D_{\rm L}, & \bar{x}_{1, i}^{\rm Pin} > D_{\rm L}
    \end{cases},  \label{x1_update}
\end{flalign}
\begin{flalign}
\widetilde{x}_{2, i+1}^{\rm Pin} = 
    \begin{cases}
        0, & \bar{x}_{2, i}^{\rm Pin} < 0 \\
        \bar{x}_{2, i}^{\rm Pin}, & 0 \leq \bar{x}_{2, i}^{\rm Pin} \leq D_{\rm L} \\
        D_{\rm L}, & \bar{x}_{2, i}^{\rm Pin} > D_{\rm L}
    \end{cases}. \label{x2_update}
\end{flalign}

As for Problem ${\bf P}_4$, it is noted that the considered problem is convex, and thus its optimal solution can be rigorously derived by invoking the KKT conditions. To this end, we construct the Lagrangian function for Problem ${\bf P}_4$, which is given by
\begin{flalign}
    \mathcal{L} = & (p_1 - \bar{p}_{1, i} )^2 + (p_2 - \bar{p}_{2, i})^2 + \lambda (p_1 + p_2 - P_{\max}) \nonumber \\
    & - \mu_1 p_1 - \mu_2 p_2,
\end{flalign}
where $\lambda$, $\mu_1$, and $\mu_2$ are the corresponding Lagrange multipliers. The KKT conditions for Problem ${\bf P}_4$ are given by
\begin{flalign}
\begin{cases}
    p_1 + p_2 \leq P_{\max},~ p_1 \geq 0,~ p_2 \geq 0, \\
    \lambda \geq 0, ~ \mu_1 \geq 0, ~ \mu_2 \geq 0,  \\ 
    \lambda (p_1 + p_2 - P_{\max}) = 0, ~ \mu_1 p_1 = 0, ~ \mu_2 p_2 = 0, \\
    2 ( p_1 - \bar{p}_{1, i} ) + \lambda - \mu_1 = 0, \\
    2 ( p_2 - \bar{p}_{2, i} ) + \lambda - \mu_2 = 0.
\end{cases} \label{kkt}
\end{flalign}
Since the Lagrange multipliers $\lambda$, $\mu_1$, and $\mu_2$ are constrained to be non-negative, the KKT conditions in \eqref{kkt} can be classified into $8$ independent cases. We then examine each case in detail to derive the closed-form optimal solution and identify the corresponding boundary conditions.

\subsubsection{Case with $\lambda = 0$, $\mu_1 = 0$, and $\mu_2=0$} Under this case, the optimal power allocation is 
\begin{flalign}
    p_1 = \bar{p}_{1, i}, ~ p_2 = \bar{p}_{2, i},
\end{flalign}
with the following boundary conditions:
\begin{flalign*}
    \bar{p}_{1, i} + \bar{p}_{2, i} \leq P_{\max},~ \bar{p}_{1, i} \geq 0, ~ \bar{p}_{2, i} \geq 0.
\end{flalign*}

\subsubsection{Case with $\lambda = 0$, $\mu_1 = 0$, and $\mu_2 > 0$} Under this case, the optimal power allocation is
\begin{flalign}
    p_1 = \bar{p}_{1, i}, ~ p_2 = 0,
\end{flalign}
with the following boundary conditions:
\begin{flalign*}
    0 \leq \bar{p}_{1, i} \leq P_{\max},~ \bar{p}_{2, i} < 0.
\end{flalign*}

\subsubsection{Case with $\lambda = 0$, $\mu_1 > 0$, and $\mu_2 = 0$} Under this case, the optimal power allocation is
\begin{flalign}
    p_1 = 0, ~ p_2 = \bar{p}_{2, i},
\end{flalign}
with the following boundary conditions:
\begin{flalign*}
    \bar{p}_{1, i} < 0,~ 0 \leq \bar{p}_{2, i} \leq P_{\max}.
\end{flalign*}

\subsubsection{Case with $\lambda > 0$, $\mu_1 = 0$, and $\mu_2 = 0$} Under this case, the optimal power allocation is
\begin{flalign}
    p_1 = \frac{P_{\max} + \bar{p}_{1, i} -\bar{p}_{2, i} }{2}, ~p_2  = \frac{P_{\max} - \bar{p}_{1, i} + \bar{p}_{2, i} }{2},
\end{flalign}
with the following boundary conditions:
\begin{flalign*}
    \bar{p}_{1, i} + \bar{p}_{2, i} > P_{\max}, ~ \bar{p}_{2, i} - \bar{p}_{1, i} < P_{\max},~ \bar{p}_{1, i} - \bar{p}_{2, i} < P_{\max}.
\end{flalign*}

\subsubsection{Case with $\lambda = 0$, $\mu_1 > 0$, and $\mu_2 > 0$} Under this case, the optimal power allocation is
\begin{flalign}
    p_1 = 0, ~ p_2 = 0,
\end{flalign}
with the following boundary conditions:
\begin{flalign*}
    \bar{p}_{1, i} < 0,~\bar{p}_{2, i} < 0.
\end{flalign*}

\subsubsection{Case with $\lambda > 0$, $\mu_1 = 0$, and $\mu_2 > 0$}  Under this case, the optimal power allocation is
\begin{flalign}
    p_1 = P_{\max}, ~p_2 = 0,
\end{flalign}
with the following boundary conditions:
\begin{flalign*}
    \bar{p}_{1, i} > P_{\max}, ~ \bar{p}_{1, i} - \bar{p}_{2, i} > P_{\max}.
\end{flalign*}

\subsubsection{Case with $\lambda > 0$, $\mu_1 > 0$, and $\mu_2 = 0$} Under this case, the optimal power allocation is
\begin{flalign}
    p_1 = 0, ~ p_2  = P_{\max},
\end{flalign}
with the following boundary conditions:
\begin{flalign*}
     \bar{p}_{2, i} > P_{\max},~
     \bar{p}_{2, i} - \bar{p}_{1, i} > P_{\max}.   
\end{flalign*}

\subsubsection{Case with $\lambda > 0$, $\mu_1 > 0$, and $\mu_2 >0$} Under this case, the optimality conditions derived from \eqref{kkt} require
\begin{flalign}
    p_1 = 0,~p_2 = 0,~p_1 + p_2 = P_{\max},
\end{flalign}
which is infeasible for any $P_{\max} > 0$, implying that this case is physically invalid.

By synthesizing all $8$ cases, the optimal solution to Problem ${\bf P}_4$, denoted by $\{ \widetilde{p}_{1, i+1}, \widetilde{p}_{2, i+1} \}$, is given by \eqref{p1p2_update}.
\begin{figure*}
\begin{flalign}
    \label{p1p2_update}
    \{ \widetilde{p}_{1, i+1}, \widetilde{p}_{2, i+1} \} = \begin{cases}
        \{ \bar{p}_{1, i}, \bar{p}_{2, i} \}, &  \bar{p}_{1, i} + \bar{p}_{2, i} \leq P_{\max},~ \bar{p}_{1, i} \geq 0, ~ \bar{p}_{2, i} \geq 0. \\
        \{ \bar{p}_{1, i}, 0 \}, & 0 \leq \bar{p}_{1, i} \leq P_{\max},~ \bar{p}_{2, i} < 0.  \\
        \{ 0, \bar{p}_{2, i} \}, & \bar{p}_{1, i} < 0,~ 0 \leq \bar{p}_{2, i} \leq P_{\max}. \\
        \left\{ \frac{P_{\max} + \bar{p}_{1, i} -\bar{p}_{2, i} }{2}, \frac{P_{\max} - \bar{p}_{1, i} + \bar{p}_{2, i} }{2}  \right\}, & \bar{p}_{1, i} + \bar{p}_{2, i} > P_{\max}, ~ \bar{p}_{2, i} - \bar{p}_{1, i} < P_{\max},~ \bar{p}_{1, i} - \bar{p}_{2, i} < P_{\max}. \\
        \{0, 0\}, & \bar{p}_{1, i} < 0,~\bar{p}_{2, i} < 0. \\
        \{P_{\max}, 0\}, & \bar{p}_{1, i} > P_{\max}, ~ \bar{p}_{1, i} - \bar{p}_{2, i} > P_{\max}  \\
        \{0, P_{\max}\}, & \bar{p}_{2, i} > P_{\max},~
     \bar{p}_{2, i} - \bar{p}_{1, i} > P_{\max}. 
    \end{cases}
\end{flalign}
\hrule
\end{figure*}

The overall procedure of the proposed PGD-based algorithm is summarized in Algorithm \ref{alg:1}.

  \begin{algorithm}
    \caption{The proposed PGD-based algorithm.} \label{alg:1}
\begin{algorithmic}
  \State{\textbf{Initialization}:}
  \State{~ Set $i:=0$ and obtain a feasible $\{ \widetilde{x}_{1, i}^{\rm Pin}, \widetilde{x}_{2, i}^{\rm Pin}, \widetilde{p}_{1, i}, \widetilde{p}_{2, i} \}$ to Problem ${\bf P}_2$;}
  \While{\rm the stop criterion is not satisfied}
      \State{Obtain $\{ \bar{x}_{1, i}^{\rm Pin}, \bar{x}_{2, i}^{\rm Pin}, \bar{p}_{1, i}, \bar{p}_{2, i} \}$ via \eqref{x1_gd}$-$\eqref{p2_gd};}
      \State{Obtain $\{ \widetilde{x}_{1, i+1}^{\rm Pin}, \widetilde{x}_{2, i+1}^{\rm Pin} \}$ via \eqref{x1_update} and \eqref{x2_update};}
      \State{Obtain $\{ \widetilde{p}_{1, i+1}, \widetilde{p}_{2, i+1} \}$ via \eqref{p1p2_update};}
      \State{Update $i := i+1$;}
  \EndWhile
\end{algorithmic}
\end{algorithm}

\textit{Remark 2:} The exact analytical derivations and the specialized PGD-based algorithm developed in this section for the foundational two-user two-PA scenario are intrinsically linked to the generalized multi-user multi-PA scenario presented in the next section. Specifically, the exact closed-form expressions for the outage probabilities derived herein serve as an indispensable theoretical benchmark to rigorously evaluate the tightness and accuracy of the approximations employed for the general case. Furthermore, the customized PGD-based algorithm leveraging KKT conditions can significantly reduce the execution time for the two-user two-PA scenario. These interconnected analytical and algorithmic insights will be comprehensively validated via numerical simulations in Section V.

\section{Joint Optimization of PA Deployment and Power Allocation}

In this section, we extend our analysis to a more general scenario with $N$ PAs servicing $N$ users. To facilitate practical design, we derive a closed-form approximation for the outage probabilities. On this basis, an SCA-based algorithm is further developed to handle the approximated optimization problem.

\subsection{Outage Probability Approximation}
For the general $N$-user $N$-PA scenario, enumerating all LoS/NLoS link combinations as in the preceding two-user two-PA case is computationally prohibitive, since the number of channel conditions scales as $2^{N^2}$. As verified in \cite{los_block}, interfering links $h_{m, n}~( m \neq n)$ are predominantly NLoS since the distance between the user and the PA in various sub-regions exceeds $D_{\rm W} / 2N$, making blockage highly probable. As illustrated in Fig. \ref{los_prob_dis}, the LoS existence probability falls below $1\%$ when the transmission distance exceeds $25~{\rm m}$. 
\begin{figure}
	\begin{center}
	\centerline{\includegraphics[ width=.46\textwidth]{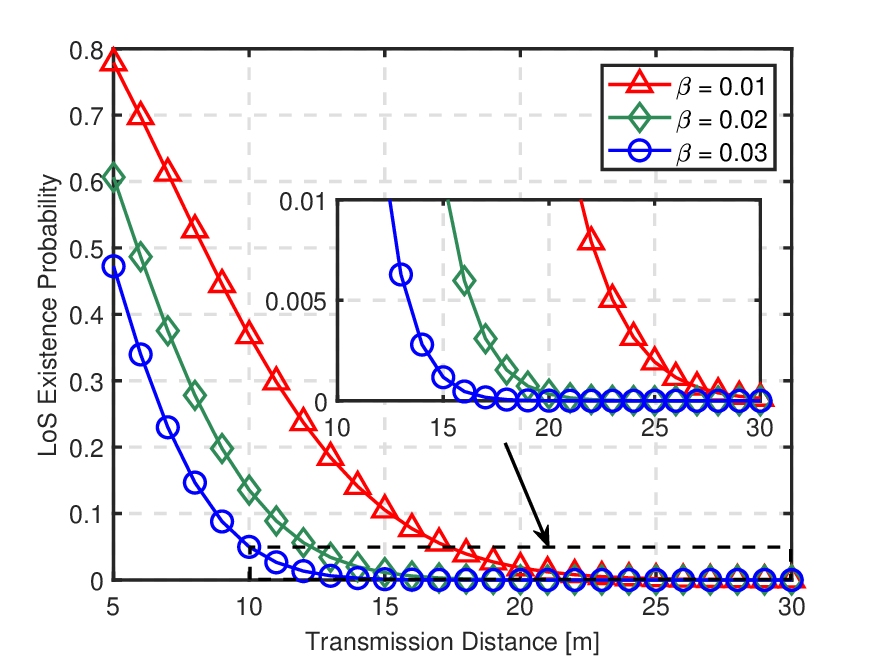}}
    \caption{LoS existence probability versus transmission distance with blockage coefficient $\beta \in \{0.01, 0.02, 0.03\}$.}
	\label{los_prob_dis}
    \vspace{-2em}
	\end{center}
\end{figure}
Therefore, we assume that all interfering links are NLoS, i.e., $\zeta_{m, n} = 0, ~\forall m \neq n$. Under this assumption, the outage probability in \eqref{p0b} can be approximated as
\begin{flalign}
     & \mathbb{P} \left( \log_2 \left( 1 + \Gamma_n \right) \leq R_n \right) \label{prob_appro} \\
     & \approx  \mathbb{P} \left( \zeta_{n, n} = 1  \right) \mathbb{P} \left( \Gamma_n \leq 2^{R_n} - 1 | \zeta_{n, n} = 1 \right) \nonumber \\
     & ~~ + \mathbb{P} \left( \zeta_{n, n} = 0 \right) \mathbb{P} \left( \Gamma_n \leq 2^{R_n} - 1 | \zeta_{n, n} = 0 \right) \nonumber \\
     & = e^{ - \beta ||{\bm \psi}_n^{\rm Pin} - {\bm \psi}_n^{\rm U} ||^2 }
      \mathbb{P} \left( a_n  \leq     \sum\nolimits_{m \neq n} z_{m, n}
      \right)  \nonumber \\
     & ~~ + \left( 1 - e^{ - \beta || {\bm \psi}_n^{\rm Pin} - {\bm \psi}_n^{\rm U}  ||^2 }   \right)
     \mathbb{P} \left( z_{n} \leq  \sum\nolimits_{m \neq n} z_{m, n} + b_n
    \right) ,  \nonumber
\end{flalign}
where 
\begin{subequations}
\begin{flalign}
    & a_n \triangleq \frac{\eta  p_n e^{ - 2 \alpha \left| \left| {\bm \psi}_n^{\rm Pin} - {\bm \psi}_{n, 0}^{\rm Pin}  \right| \right| }}{\left| \left| {\bm \psi}_{n}^{\rm Pin} - {\bm \psi}_n^{\rm U} \right| \right|^2 } - \left( 2^{R_n} - 1 \right) \sigma_n^2 , \\
    & b_n \triangleq \left( 2^{R_n} - 1 \right)\sigma_n^2 , \\
    & z_{m, n} \triangleq  \frac{\eta p_m \kappa^2 \left( 2^{R_n} - 1 \right) }{ e^{ 2 \alpha \left| \left| {\bm \psi}_m^{\rm Pin} - {\bm \psi}_{m, 0}^{\rm Pin}  \right| \right| } \left| \left| {\bm \psi}_{m}^{\rm Pin} - {\bm \psi}_n^{\rm U} \right| \right|^2 }  \left| g_{m, n}^{\rm NLoS}  \right|^2  \nonumber \\
    & ~~~~~~ \sim {\rm Exp}\left(  \frac{e^{ 2 \alpha \left| \left| {\bm \psi}_m^{\rm Pin} - {\bm \psi}_{m, 0}^{\rm Pin}  \right| \right| } \left| \left| {\bm \psi}_{m}^{\rm Pin} - {\bm \psi}_n^{\rm U} \right| \right|^2}{\eta p_m \kappa^2 \left( 2^{R_n} - 1 \right)} \right),
    \\
    & z_n \triangleq \frac{\eta p_n \kappa^2 e^{ - 2 \alpha \left| \left| {\bm \psi}_n^{\rm Pin} - {\bm \psi}_{n, 0}^{\rm Pin}  \right| \right| }}{\left| \left| {\bm \psi}_{n}^{\rm Pin} - {\bm \psi}_n^{\rm U} \right| \right|^2 }  \left| g_{n, n}^{\rm NLoS}  \right|^2 \nonumber \\
    & ~~~~~~ \sim {\rm Exp} \left( \frac{\left| \left| {\bm \psi}_{n}^{\rm Pin} - {\bm \psi}_n^{\rm U} \right| \right|^2}{\eta p_n \kappa^2 e^{ - 2 \alpha \left| \left| {\bm \psi}_n^{\rm Pin} - {\bm \psi}_{n, 0}^{\rm Pin}  \right| \right| }}  \right).
\end{flalign}
\end{subequations}

We first tackle the term $\mathbb{P}( a_n \leq \sum_{m \neq n}^N z_{m, n} )$. Note that the aggregate interference $\sum_{m \neq n}^N z_{m, n}$ obeys a hypo-exponential distribution, whose survival function is characterized in the following lemma.

\begin{Lemma}\label{hypo_exp_n}
    Let $\{ z_n\}_{n=1}^N$ denote a set of independent random variables with $z_n \sim {\rm Exp} (\lambda_n)$, where $\lambda_n$ denotes the corresponding rate parameter. For any given constant $c \geq 0$, the survival probability of $\sum_{n=1}^N z_n$ is given by
    \begin{flalign}
        \mathbb{P} \left( c \leq \sum_{n=1}^N z_n  \right) = \sum_{n=1}^N \left( \prod^N_{ k \neq n} \frac{\lambda_k }{ \lambda_k - \lambda_n}   \right) e^{ - \lambda_n c}.
    \end{flalign}
\end{Lemma}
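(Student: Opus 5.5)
The plan is to derive the survival function of $S_N \triangleq \sum_{n=1}^N z_n$ from its Laplace transform, using partial fractions. This requires the rates $\lambda_n$ to be pairwise distinct; otherwise the product $\prod_{k\neq n}\lambda_k/(\lambda_k-\lambda_n)$ is undefined. I will state this assumption explicitly; it holds generically here, since the rates depend on continuous distances.

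\emph{Step 1: Laplace transform.} By independence, for $s\ge 0$,
\begin{flalign*}
\mathbb{E}\big(e^{-sS_N}\big)=\prod_{k=1}^N \frac{\lambda_k}{\lambda_k+s}.
\end{flalign*}

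\emph{Step 2: partial fractions.} With distinct poles $s=-\lambda_n$, all simple, the expansion is
\begin{flalign*}
\prod_{k=1}^N \frac{\lambda_k}{\lambda_k+s}=\sum_{n=1}^N \frac{C_n\lambda_n}{\lambda_n+s},\qquad C_n\triangleq\prod_{k\neq n}\frac{\lambda_k}{\lambda_k-\lambda_n}.
\end{flalign*}
The residue at $s=-\lambda_n$ gives $C_n\lambda_n=\lambda_n\prod_{k\neq n}\lambda_k/(\lambda_k-\lambda_n)$, which is exactly this coefficient.

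\emph{Step 3: invert.} Each term $\lambda_n/(\lambda_n+s)$ is the transform of the density $\lambda_n e^{-\lambda_n x}$. By uniqueness of the Laplace transform, the density of $S_N$ is therefore $f(x)=\sum_n C_n\lambda_n e^{-\lambda_n x}$ for $x\ge0$. Integrating over $[c,\infty)$ gives $\mathbb{P}(S_N\ge c)=\sum_n C_n e^{-\lambda_n c}$. Since $S_N$ is continuous, $\mathbb{P}(c\le S_N)=\mathbb{P}(c<S_N)$, which is the claimed identity.

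\emph{Sanity check.} Setting $c=0$ must give $\sum_n C_n=1$. This follows from Step 2 by letting $s\to0$, since both sides equal $1$ there. It is a consistency check, not an additional step.

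\emph{Alternative route.} If transforms are to be avoided, I would argue by induction on $N$ in the spirit of Lemma~\ref{exp_dis_2}. Assume the density of $S_{N-1}$ is $\sum_{n<N}C_n^{(N-1)}\lambda_n e^{-\lambda_n x}$. Convolving with $\lambda_N e^{-\lambda_N x}$ uses
\begin{flalign*}
\int_0^x \lambda_n e^{-\lambda_n t}\lambda_N e^{-\lambda_N(x-t)}\,{\rm d}t=\frac{\lambda_n\lambda_N}{\lambda_N-\lambda_n}\big(e^{-\lambda_n x}-e^{-\lambda_N x}\big).
\end{flalign*}
This immediately produces the new coefficients $C_n^{(N)}=C_n^{(N-1)}\lambda_N/(\lambda_N-\lambda_n)$ for $n<N$. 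The hard part is showing that the coefficient of $e^{-\lambda_N x}$ equals $\prod_{k<N}\lambda_k/(\lambda_k-\lambda_N)$. That reduces to a Lagrange-interpolation identity of the form $\sum_n \prod_{k\neq n}\lambda_k/(\lambda_k-\lambda_n)=1$, which is exactly what the partial-fraction route gets for free. This is why I prefer Step 2 as the main argument.
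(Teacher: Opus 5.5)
Your proof is correct, but it takes a different and more explicit route than the paper. The paper gives no derivation at all: it defers to the standard characterization of the hypo-exponential distribution in Chapter~5 of the cited Ross text. There the density is built by repeated convolution and induction on $N$, assuming $\lambda_i\neq\lambda_j$, which is essentially your alternative route.

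Your main route via the Laplace transform and partial fractions is self-contained and buys two things:
\begin{itemize}
\item The identity $\sum_n \prod_{k\neq n}\lambda_k/(\lambda_k-\lambda_n)=1$ comes for free at $s=0$. This is the one nontrivial ingredient needed to close the convolution induction.
\item The expansion $\prod_k \lambda_k/(\lambda_k+s)=\sum_n C_n\lambda_n/(\lambda_n+s)$, applied to the $N-1$ variables $\{z_m\}_{m\neq n}$ and evaluated at $s=0$ and at $s=\lambda_n$, gives exactly the identities the paper uses without proof in steps $(a)$ and $(b)$ of the proof of Lemma~\ref{prob_nlos}. So your argument covers both lemmas at once.
\end{itemize}

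You are right to state the distinct-rates hypothesis explicitly. The lemma as written omits it, yet without it the closed form is undefined. The same holds for the denominators $p_m\Omega_{k,n}(x_k^{\rm Pin})-p_k\Omega_{m,n}(x_m^{\rm Pin})$ in \eqref{los_prob_appro}.

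One minor slip in your induction sketch: you normalize the density as $\sum_n C_n^{(N)}\lambda_n e^{-\lambda_n x}$. With that normalization, the quantity to identify is the coefficient of $\lambda_N e^{-\lambda_N x}$, not of $e^{-\lambda_N x}$.
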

\begin{IEEEproof}
    This result follows from the standard characterization of the hypo-exponential distribution. Please refer to Chapter 5 of \cite{hypo-exp} for details.
\end{IEEEproof}

Lemma \ref{hypo_exp_n} imposes the condition $a_n \geq 0$, which translates to the following constraint
\begin{flalign}
    R_n \leq \log_2 \left(  1 + \frac{\eta  p_n e^{ - 2 \alpha \left| \left| {\bm \psi}_n^{\rm Pin} - {\bm \psi}_{n, 0}^{\rm Pin}  \right| \right| }}{\left| \left| {\bm \psi}_{n}^{\rm Pin} - {\bm \psi}_n^{\rm U} \right| \right|^2  \sigma_n^2 } \right). \label{R_upper_bound}
\end{flalign}
Constraint \eqref{R_upper_bound} must hold for the considered problem. Otherwise, the target information rate $R_n$ exceeds the maximum achievable rate even when the desired link $h_{n, n}$ is LoS in the absence of interference. In this scenario, a rate outage event is unavoidable for all channel realizations. 

Combining Lemma \ref{hypo_exp_n} and constraint \eqref{R_upper_bound}, we obtain a closed-form expression for $\mathbb{P}(a_n \leq \sum^N_{m \neq n} z_{m, n} )$ as 
\begin{flalign}
     \mathbb{P} &\left( a_n \leq \sum\nolimits_{m \neq n}^N z_{m, n}  \right) \nonumber \\
     &= \sum\nolimits^N_{m \neq n} \Bigg( e^{  \frac{ \sigma_n^2 \Omega_{m, n} \left( x_m^{\rm Pin} \right) }{\eta p_m \kappa^2 }    - \frac{ p_n \Omega_{m, n} \left( x_m^{\rm Pin} \right) }{p_m \kappa^2 \left( 2^{R_n} - 1 \right) \Omega_{n, n} \left( x_n^{\rm Pin} \right)} } \nonumber \\
     & \times \left( \prod\nolimits^N_{k \neq m, n} \frac{ p_m \Omega_{k, n} \left( x_k^{\rm Pin} \right)   }{ p_m \Omega_{k, n} \left( x_k^{\rm Pin} \right) - p_k \Omega_{m, n} \left( x_m^{\rm Pin} \right)  }  \right)  \Bigg),  \label{los_prob_appro}
\end{flalign}
where
\begin{flalign}
    \Omega_{m, n} \left( x_m^{\rm Pin} \right) \triangleq  \left| \left| {\bm \psi}_m^{\rm Pin} - {\bm \psi}_n^{\rm U}  \right| \right|^2 e^{2 \alpha \left|\left| {\bm \psi}_m^{\rm Pin} - {\bm \psi}_{m, 0}^{\rm Pin}   \right|\right| }. 
\end{flalign}

We next evaluate the term $\mathbb{P}(z_n \leq \sum_{m \neq n}^N z_{m, n} + b_n )$ by resorting to the following lemma. 

\begin{Lemma} \label{prob_nlos}
Let $\{ z_n\}_{n=1}^N$ be a set of independent random variables with $z_n \sim {\rm Exp} (\lambda_n)$, where $\lambda_n$ denotes the corresponding rate parameter. For any given constant $c \geq 0$, the following equation holds:
\begin{flalign}
    \mathbb{P}\left( z_n \leq \sum^N_{m \neq n} z_m + c  \right) = 1 - e^{ - \lambda_n c} \left(\prod^N_{m \neq n} \frac{\lambda_m}{\lambda_m + \lambda_n}   \right). \label{lemma3_result}
\end{flalign}
\end{Lemma}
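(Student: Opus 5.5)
We prove Lemma \ref{prob_nlos} by conditioning on the aggregate $S \triangleq \sum_{m \neq n}^N z_m$, which is independent of $z_n$, and taking the complement. Writing
\begin{flalign*}
\mathbb{P}\left( z_n \leq S + c \right) = 1 - \mathbb{P}\left( z_n > S + c \right),
\end{flalign*}
the exponential survival function of $z_n$ gives, conditionally on $S$, $\mathbb{P}( z_n > S + c \,|\, S ) = e^{-\lambda_n c} e^{-\lambda_n S}$. This uses $c \geq 0$ and $S \geq 0$, so that $S + c \geq 0$ and no truncation at zero is needed. This mirrors the first step in the proof of Lemma \ref{exp_dis_2}, where the inner integral over $x_1$ produced $1 - e^{-\lambda_1 (x_2 + c)}$.

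Taking the expectation over $S$ leaves
\begin{flalign*}
\mathbb{P}\left( z_n \leq S + c \right) = 1 - e^{-\lambda_n c}\, \mathbb{E}\left( e^{-\lambda_n S} \right),
\end{flalign*}
so the whole problem reduces to evaluating the Laplace transform of $S$ at $s = \lambda_n > 0$. Since the $z_m$ are independent, this transform factorizes as $\prod_{m \neq n} \mathbb{E}( e^{-\lambda_n z_m} )$. For a single exponential variable, $\mathbb{E}( e^{-s z_m} ) = \int_0^\infty \lambda_m e^{-(\lambda_m + s) x}\, {\rm d}x = \lambda_m / (\lambda_m + s)$. Substituting $s = \lambda_n$ and multiplying the factors yields exactly \eqref{lemma3_result}.

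The one point that needs care is justifying the conditioning and factorization. This follows from Fubini's theorem, because all integrands are nonnegative and bounded by one. Two features are worth noting. First, this route avoids the hypo-exponential density of Lemma \ref{hypo_exp_n} altogether, so no distinctness assumption on the rates $\lambda_m$ is required. Second, as a consistency check, setting $N = 2$ recovers Lemma \ref{exp_dis_2}.
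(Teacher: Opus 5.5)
Your proof is correct. Its first step coincides with the paper's: integrating out $z_n$ for a fixed value $y$ of $S=\sum_{m\neq n}z_m$ gives $1-e^{-\lambda_n(y+c)}$, so everything reduces to $1-e^{-\lambda_n c}\,\mathbb{E}(e^{-\lambda_n S})$.

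The routes part at the evaluation of that expectation. The paper writes out the explicit hypo-exponential density of $S$ (the one behind Lemma \ref{hypo_exp_n}) and integrates term by term, obtaining $\sum_{m\neq n}(\prod_{k\neq m,n}\frac{\lambda_k}{\lambda_k-\lambda_m})(1-e^{-\lambda_n c}\frac{\lambda_m}{\lambda_m+\lambda_n})$. It then uses the normalization of the density to extract the constant $1$. Only after that does it invoke the Laplace transform of the hypo-exponential law, to collapse the remaining partial-fraction sum into $\prod_{m\neq n}\frac{\lambda_m}{\lambda_m+\lambda_n}$.

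You go straight to that Laplace transform and factorize it by independence. This makes the density, the normalization step and the partial-fraction identity unnecessary, since the paper ends up relying on the same transform anyway.

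Beyond being shorter, your route proves the lemma exactly as stated. The paper's density has denominators $\lambda_k-\lambda_m$, so its argument tacitly requires the rates $\{\lambda_m\}_{m\neq n}$ to be pairwise distinct; coincident rates would need a separate continuity argument. Your argument needs no such assumption. The only thing the paper's route exhibits that yours does not is the explicit equality between the partial-fraction sum and the product, which this lemma does not need.

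One small remark: for the interchange of integration, Tonelli's theorem for nonnegative integrands already suffices; boundedness by one plays no role.
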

\begin{IEEEproof}
    Let $y_n = \sum^N_{m \neq n} z_m$. By definition, $y_n$ follows a hypo-exponential distribution, whose probability density function is given by
    \begin{flalign*}
        f_{y_n} (y) = \begin{cases}
            \sum_{m \neq n}^N \lambda_m e^{ - \lambda_m y} \left( \prod^N_{k \neq m, n} \frac{\lambda_k} {\lambda_k - \lambda_m } \right), & y\geq 0, \\
            0, & y < 0 .
        \end{cases}
    \end{flalign*}
    Then, the probability in \eqref{lemma3_result} can be evaluated via the joint probability density function of $(z_n, y_n)$, which yields
    \begin{flalign}
        & \mathbb{P} \left( z_n \leq \sum^N_{m \neq n} z_{m} + c  \right) = \mathbb{P} \left( z_n \leq y_n + c  \right) = \int_{y = 0}^{\infty}  \Bigg[ \int_{x = 0}^{y + c}  \nonumber \\
        &  \lambda_n e^{-\lambda_n x} {\rm d} x \Bigg] \sum^N_{m \neq n}  \lambda_m  e^{ - \lambda_m y} \left( \prod^N_{k \neq m, n} \frac{\lambda_k} {\lambda_k - \lambda_m } \right)  {\rm d} y \nonumber \\
        & = \sum^N_{m \neq n} \left( \prod^N_{ k \neq m, n} \frac{\lambda_k}{\lambda_k - \lambda_m} \right) \left(  1 -  e^{-\lambda_n c} \frac{\lambda_m}{\lambda_m + \lambda_n}  \right)  \nonumber \\
        & \overset{(a)}{=}  1 -   e^{-\lambda_n c} \sum^N_{m \neq n} \left( \prod^N_{ k \neq m, n} \frac{\lambda_k}{\lambda_k - \lambda_m} \right) \frac{\lambda_m}{\lambda_m + \lambda_n} \nonumber \\
        & \overset{(b)}{=} 1 - e^{-\lambda_n c} \left( \prod^N_{m \neq n} \frac{\lambda_m}{\lambda_m + \lambda_n}  \right),
    \end{flalign} 
    where  $(a)$ is due to the normalization property of the hypo-exponential distribution, and $(b)$ is due to the Laplace transform of the hypo-exponential distribution. 
\end{IEEEproof}

By applying Lemma \ref{prob_nlos}, we derive a closed-form expression for $\mathbb{P}( z_n \leq \sum_{m \neq n}^N z_{m, n} + b_n )$, which is given by
\begin{flalign}
    & \mathbb{P} \left( z_n \leq \sum\nolimits^N_{m \neq n} z_{m, n} + b_n  \right) = 1 - e^{- \frac{ \left( 2^{R_n} - 1 \right)\sigma_n^2 \Omega_{n, n} \left( x_n^{\rm Pin} \right) }{\eta p_n \kappa^2 } }  \nonumber  \\
    & \times
    \left( \prod^N\nolimits_{m \neq n} \frac{ p_n \Omega_{m, n} \left( x_m^{\rm Pin} \right)  }{ p_n \Omega_{m, n} \left( x_m^{\rm Pin} \right)  + p_m \left( 2^{R_n} - 1 \right) \Omega_{n, n} \left( x_n^{\rm Pin} \right) }  \right). \label{nlos_prob_appro}
\end{flalign}

Substituting \eqref{los_prob_appro} and \eqref{nlos_prob_appro} into \eqref{prob_appro} yields a tractable approximation for the outage probability $\mathbb{P}(\log_2(1 + \Gamma_n) \leq R_n )$, which is denoted by $\Phi_n (\{p_m, x_m^{\rm Pin}, R_m \})$ and given in \eqref{phi_n}.
\begin{figure*}
\begin{flalign}
    & \Phi_n \left( \{ p_m, x_m^{\rm Pin}, R_m  \}  \right) \nonumber \\  
    & \triangleq
    e^{ - \beta ||{\bm \psi}_n^{\rm Pin} - {\bm \psi}_n^{\rm U} ||^2 }
      \sum\nolimits_{m \neq n} \Bigg( e^{  \frac{ \sigma_n^2 \Omega_{m, n} \left( x_m^{\rm Pin} \right) }{\eta p_m \kappa^2 }    - \frac{ p_n \Omega_{m, n} \left( x_m^{\rm Pin} \right) }{p_m \kappa^2 \left( 2^{R_n} - 1 \right) \Omega_{n, n} \left( x_n^{\rm Pin} \right)} } \left( \prod\nolimits_{k \neq m, n} \frac{ p_m \Omega_{k, n} \left( x_k^{\rm Pin} \right)   }{ p_m \Omega_{k, n} \left( x_k^{\rm Pin} \right) - p_k \Omega_{m, n} \left( x_m^{\rm Pin} \right)  }  \right)  \Bigg)
     \nonumber \\
     & ~~ + \left( 1 - e^{ - \beta || {\bm \psi}_n^{\rm Pin} - {\bm \psi}_n^{\rm U}  ||^2 }   \right)
     \left(  1 - e^{- \frac{ \left( 2^{R_n} - 1 \right)\sigma_n^2 \Omega_{n, n} \left( x_n^{\rm Pin} \right) }{\eta p_n \kappa^2 } } 
    \left( \prod\nolimits_{m \neq n} \frac{ p_n \Omega_{m, n} \left( x_m^{\rm Pin} \right)  }{ p_n \Omega_{m, n} \left( x_m^{\rm Pin} \right)  + p_m \left( 2^{R_n} - 1 \right) \Omega_{n, n} \left( x_n^{\rm Pin} \right) }  \right)  \right)  \label{phi_n}
\end{flalign}
\hrule
\end{figure*}

Consequently, Problem ${\bf P}_0$ can be approximated by the following optimization problem:
\begin{subequations}
	\begin{flalign}
		{{\bf P}_5}: & \mathop{\max}  \limits_{\{ x_n^{\rm Pin}, p_n, R_n \} } \sum\nolimits_{n=1}^N R_n   
			  \label{p5a}   \\
		{\rm s.t.} ~ & \Phi_n \left( \{ p_m, x_m^{\rm Pin}, R_m \} \right) \leq \epsilon,  \forall n \in \{1, 2, \ldots, N\}, \label{p5b}  \\
        & \eqref{p0c},~\eqref{p0d},~\eqref{R_upper_bound}. \nonumber
	\end{flalign}
\end{subequations}
Problem ${\bf P}_5$ is non-convex due to constraints \eqref{R_upper_bound} and \eqref{p5b}. Furthermore, the complicated coupling among $\{ x_n^{\rm Pin} \}$, $\{p_n\}$, and $\{R_n\}$ makes the considered problem more intractable.

\subsection{SCA-based Algorithm}
We address the non-convex constraint \eqref{R_upper_bound} by introducing slack variables $\{ d_{m, n}\}$, $\{ d_n \}$, and $\{ f_n \}$, which satisfy the following bounding conditions:
\begin{flalign}
    & e^{d_{m, n}} \geq \left| \left| {\bm \psi}_m^{\rm Pin} - {\bm \psi}_n^{\rm U} \right| \right|^2, \label{con_d_mn} \\
    & d_n \geq 2 \alpha \left| \left| {\bm \psi}_n^{\rm Pin} - {\bm \psi}_{n, 0}^{\rm Pin}  \right| \right|, \label{con_d_n}  \\
    & e^{f_n}  \geq 2^{R_n} - 1. \label{con_f_n}
\end{flalign}
As a result, constraint \eqref{R_upper_bound} can be reformulated as the following convex form:
\begin{flalign}
    \sigma_n^2  e^{f_n + d_n + d_{n, n}}  \leq \eta  p_n . \label{con_R_up}
\end{flalign}

Next, we turn our attention to the non-convex constraint \eqref{p5b}. The first term in $\Phi_n ( \{ p_m, x_m^{\rm Pin}, R_m \})$ is highly complicated due to the coupling between the summation and multiplication operations. To circumvent this challenge, we establish the following proposition. 
\begin{mypro} \label{proposit1}
For any given variables $\{ \lambda_{m, n} \} > 0$ and $a_n \geq 0$, the following inequality holds for an arbitrary auxiliary variable $\theta_n \in (0, \min_{m \neq n} \lambda_{m, n} )$,
\begin{flalign}
    \sum^N_{m \neq n} e^{ - \lambda_{m, n} a_n} \prod^N_{k \neq m,n} \frac{\lambda_{k, n}}{ \lambda_{k, n} - \lambda_{m, n} } \leq e^{ - a_n \theta_n} \prod^N_{m \neq n} \frac{\lambda_{m, n}}{\lambda_{m,n} - \theta_n}. \label{pro_1_inequality}
\end{flalign}
\end{mypro}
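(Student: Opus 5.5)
By Lemma \ref{hypo_exp_n}, the left-hand side of \eqref{pro_1_inequality} is exactly the survival probability $\mathbb{P}\left( a_n \leq \sum_{m \neq n} z_{m, n} \right)$, where the $z_{m,n} \sim {\rm Exp}(\lambda_{m,n})$ are independent. The plan is therefore to read the inequality as a Chernoff bound on the tail of a hypo-exponential sum, rather than to manipulate the partial-fraction sum directly. This also removes the difficulty that the individual coefficients $\prod_{k\neq m,n}\lambda_{k,n}/(\lambda_{k,n}-\lambda_{m,n})$ can be negative or undefined when rates coincide. In that case I will either invoke continuity in the rates, or note that the left-hand side was introduced precisely as this probability.

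The first step is to fix $\theta_n \in (0, \min_{m\neq n}\lambda_{m,n})$ and apply Markov's inequality to the nonnegative variable $e^{\theta_n Y_n}$, where $Y_n \triangleq \sum_{m\neq n} z_{m,n}$. Since $\theta_n > 0$,
\begin{flalign*}
\mathbb{P}\left( Y_n \geq a_n \right) = \mathbb{P}\left( e^{\theta_n Y_n} \geq e^{\theta_n a_n} \right) \leq e^{-\theta_n a_n}\, \mathbb{E}\left[ e^{\theta_n Y_n} \right].
\end{flalign*}
The second step uses independence to factor the moment generating function. For an exponential variable, $\mathbb{E}[e^{\theta z_{m,n}}] = \lambda_{m,n}/(\lambda_{m,n}-\theta)$, which is finite exactly when $\theta < \lambda_{m,n}$. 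This is why $\theta_n$ must stay below $\min_{m\neq n}\lambda_{m,n}$. Multiplying these factors gives the right-hand side of \eqref{pro_1_inequality}.

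The only point requiring care is the identification of the left-hand side with the probability. Lemma \ref{hypo_exp_n} is stated for distinct rates and for $c = a_n \geq 0$, which is assumed. For the non-distinct case I would argue by continuity: both sides are continuous in $\{\lambda_{m,n}\}$, and the left-hand side extends continuously to coincident rates as the true probability. The strict inequality $\{ Y_n > a_n\}$ versus $\{ Y_n \geq a_n\}$ does not matter, because $Y_n$ is a continuous random variable. As a side remark, minimizing the right-hand side over $\theta_n$ gives the tightest bound, and the resulting function is convex in $\theta_n$, which makes it convenient for the subsequent SCA step.
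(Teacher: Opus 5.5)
Your proposal is correct and follows the paper's proof. Like the paper, you identify the left-hand side with the hypo-exponential survival probability via Lemma~\ref{hypo_exp_n}, apply the Chernoff bound, and factor the moment generating function by independence using $\mathbb{E}[e^{\theta_n z_{m,n}}] = \lambda_{m,n}/(\lambda_{m,n}-\theta_n)$ for $\theta_n < \min_{m \neq n}\lambda_{m,n}$. Your extra remark about coincident rates is a harmless refinement the paper omits; the partial-fraction left-hand side is only defined for pairwise distinct rates anyway.
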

\begin{IEEEproof}
    According to Lemma \ref{hypo_exp_n}, the left-hand side of the inequality \eqref{pro_1_inequality} can be interpreted as the complementary CDF of a hypo-exponential distribution, which yields
    \begin{flalign}
        \sum^N_{m \neq n} e^{- \lambda_{m, n} a_n} \prod^N_{k \neq m, n} \frac{\lambda_{k, n}}{\lambda_{k, n} - \lambda_{m, n}} = \mathbb{P} \left( \sum^N_{m \neq n} z_{m, n} \geq a_n  \right),  \label{pro_1_1}
    \end{flalign}
    where $z_{m, n} \sim {\rm Exp}(\lambda_{m, n})$.

    By applying the Chernoff bounding technique to \eqref{pro_1_1}, we can obtain
     \begin{flalign}
        \mathbb{P} \left( \sum^N_{m \neq n} z_{m, n} \geq  a_n \right) & \leq e^{-a_n \theta_n} \prod^N_{m \neq n} \mathbb{E} \left( e^{ \theta_n z_{m, n} }  \right) \nonumber \\
        & \overset{(a)}{=} 
        e^{-a_n \theta_n} \prod^N_{m \neq n} \frac{\lambda_{m, n}}{ \lambda_{m, n} - \theta_n},
   \end{flalign}
   where $(a)$ follows from the moment generating function of the exponential distribution under the convergence condition $\theta_n < \lambda_{m, n}, ~\forall m \neq n$. This completes the proof.
\end{IEEEproof}

Based on Proposition \ref{proposit1}, we introduce the auxiliary variables $\{ t_n \}$ and slack variables $\{ l_{m, n}, l_n, v_n \}$ to decouple the highly intertwined terms\footnote{To further facilitate the decoupling of variables, the introduced auxiliary variable $\{\theta_n \}$ in Proposition \ref{proposit1} is substituted with $\{ e^{t_n} \}$.}, which satisfy that
\begin{subequations}
\begin{flalign}
    &
    \eta p_m \kappa^2 e^{t_n + f_n} < e^{l_m + l_{m, n}},~ \forall m \neq n, \label{con_t_n} \\
    &
    e^{l_{m, n}} \leq \left| \left| {\bm \psi}_m^{\rm Pin} - {\bm \psi}_n^{\rm U} \right| \right|^2,  \label{con_l_mn} \\
    &
    l_n \leq 2 \alpha \left|  \left| {\bm \psi}_n^{\rm Pin} - {\bm \psi}_{n, 0}^{\rm Pin} \right| \right|, \label{con_l_n} \\
    & 
    v_n + \beta e^{l_{n, n}} + \eta p_n e^{t_n - d_n - d_{n, n}} \geq \sigma_n^2 e^{f_n + t_n}. \label{con_v_n}
\end{flalign}
\end{subequations}
By combining Proposition \ref{proposit1} with the introduced constraints \eqref{con_d_mn}, \eqref{con_d_n}, \eqref{con_f_n}, \eqref{con_t_n}, \eqref{con_l_mn}, \eqref{con_l_n}, and \eqref{con_v_n}, the Chernoff upper bound of the first term in $\Phi_n(\{ p_m, x_m^{\rm Pin}, R_m \})$ can be obtained as
\begin{flalign}
    e^{v_n} \prod_{m \neq n} \left( 1 -  \eta p_m \kappa^2 e^{t_n + f_n - l_m -l_{m, n}} \right)^{-1}. \label{first_up_1_init}
\end{flalign}
To decouple the multiplication operations in \eqref{first_up_1_init}, we introduce slack variables $\{ u_{m, n} \}$ such that
\begin{flalign}
    &
    e^{u_{m, n}} \leq 
    1 -  \eta p_m \kappa^2 e^{t_n + f_n - l_m -l_{m, n}}. \label{con_u_mn}
\end{flalign}
The upper bound in \eqref{first_up_1_init} can be rewritten as
\begin{flalign}
    e^{v_n - \sum_{m \neq n} u_{m, n} }.  \label{first_up}
\end{flalign}

Similarly, by incorporating constraints \eqref{con_d_mn}, \eqref{con_d_n}, \eqref{con_f_n}, \eqref{con_l_mn}, and \eqref{con_l_n}, the second term in $\Phi_n (\{ p_m, x_m^{\rm Pin}, R_m \})$ can be reformulated as
\begin{flalign}
    &
    \left( 1 - e^{ - \beta e^{d_{n, n}} }   \right)
     \Bigg(  1 - e^{- \frac{ \sigma_n^2 e^{f_n + d_n + d_{n, n}} }{\eta p_n \kappa^2 } }  \nonumber \\
     & \times
    \left( \prod\nolimits_{m \neq n} \frac{ 1  }{ 1 +  \frac{p_m }{p_n }  e^{f_n + d_n + d_{n, n} - l_m - l_{m, n} }  }  \right)  \Bigg). \label{second_up_init}
\end{flalign}
To simplify \eqref{second_up_init}, we introduce slack variables $\{ w_{m, n} \}$, $\{ \iota_n \}$, and $\{ \tau_n \}$, which are restricted by
\begin{subequations}
\begin{flalign}
    &
    p_n e^{w_{m, n}} \geq  p_n +  p_m   e^{f_n + d_n + d_{n, n} - l_m - l_{m, n} }, \label{con_w_mn}  \\
    &
     \iota_n  \leq - \beta e^{d_{n, n}}, \label{con_iota_n} \\
     &
     \eta p_n \kappa^2 \tau_n \leq 
     - \sigma_n^2 e^{f_n + d_n + d_{n, n}}. \label{con_tau_n}
\end{flalign}
\end{subequations}
Exploiting the above introduced constraints, \eqref{second_up_init} can be decoupled into the following form
\begin{flalign}
    \left( 1 - e^{ \iota_n }   \right)
     \left(  1 - e^{ \tau_n  -\sum_{m \neq n} w_{m, n} }   \right). \label{second_up}
\end{flalign}

By substituting the derived upper bounds \eqref{first_up} and \eqref{second_up} into constraint \eqref{p5b}, we can obtain that
\begin{flalign}
     &
    e^{v_n - \sum_{m \neq n} u_{m, n} } + e^{ \iota_n + \tau_n  -\sum_{m \neq n} w_{m, n} } + 1 \nonumber \\
    & ~~
     \leq e^{ \iota_n } + e^{ \tau_n  -\sum_{m \neq n} w_{m, n} }  + \epsilon. \label{con_rate_outage}
\end{flalign}

Furthermore, it is observed that $\{p_n\}$ are multiplicatively coupled with other optimization variables in many constraints. To decouple these terms, we execute a variable replacement by defining $p_n = e^{q_n},~\forall n \in \{1, 2, \ldots, N\}$. Therefore, constraints \eqref{p0c}, \eqref{con_R_up}, \eqref{con_t_n}, \eqref{con_v_n}, \eqref{con_u_mn}, \eqref{con_w_mn}, and \eqref{con_tau_n} are respectively re-expressed as
\begin{subequations}
\begin{flalign}
    &
     \sum\nolimits_{n = 1}^N e^{q_n} \leq P_{\max}, 
     \label{con_transmit_q}  \\
    &
      f_n + d_n + d_{n, n} - q_n  \leq \ln \left( \frac{\eta}{\sigma_n^2} \right) ,  \label{con_R_up_q} \\
    & 
    \ln \left( \eta \kappa^2 \right) + q_m + t_n + f_n < l_m + l_{m, n},~ \forall m \neq n, \label{con_t_n_q}  \\
    &
    v_n + \beta e^{l_{n, n}} + \eta e^{q_n + t_n - d_n - d_{n, n}} \geq \sigma_n^2 e^{f_n + t_n}, \label{con_v_n_q}  \\
    &
     e^{u_{m, n}} \leq 
    1 -  \eta \kappa^2 e^{q_m + t_n + f_n - l_m -l_{m, n}}, \label{con_u_mn_q} \\
    &  
    1 \geq  e^{ - w_{m, n}} + e^{q_m + f_n + d_n + d_{n, n} - l_m - l_{m, n} - q_n - w_{m, n} }, \label{con_w_mn_q}  \\
    &
     \eta \kappa^2 \tau_n + \sigma_n^2 e^{f_n + d_n + d_{n, n} - q_n} \leq 0.  \label{con_tau_n_q}
\end{flalign}
\end{subequations}

Through the aforementioned analytical reformulations and tractable approximations, Problem ${\bf P}_5$ can be transformed into the following problem
\begin{subequations}
	\begin{flalign}
		{{\bf P}_6}: & \mathop{\max}  \limits_{ \mathcal{L} } \sum\nolimits_{n=1}^N R_n      \\
		{\rm s.t.} ~ & \eqref{p0d}, \eqref{con_d_mn}, \eqref{con_d_n}, \eqref{con_f_n}, \eqref{con_l_mn}, \eqref{con_l_n}, \eqref{con_iota_n}, \eqref{con_rate_outage},  
         \nonumber \\
        &
        \eqref{con_transmit_q}, \eqref{con_R_up_q}, \eqref{con_t_n_q}, \eqref{con_v_n_q}, \eqref{con_u_mn_q}, \eqref{con_w_mn_q}, \eqref{con_tau_n_q}. \nonumber
	\end{flalign}
\end{subequations}
where  $\mathcal{L} \triangleq$~$\{ x_n^{\rm Pin}, R_n, d_{m, n}, d_n, f_n, t_n, l_{m, n}, l_n, v_n, u_{m, n},$ $w_{m, n}, \iota_n, \tau_n, q_n \}$. 

Problem ${\bf P}_6$ is non-convex due to  constraints \eqref{con_d_mn}, \eqref{con_f_n}, \eqref{con_l_mn}, \eqref{con_l_n}, \eqref{con_rate_outage}, and \eqref{con_v_n_q}. It is observed that both sides of these constraints are convex/concave functions. Consequently, given a feasible point to Problem ${\bf P}_6$, i.e.,  $\widetilde{\mathcal{L}}\triangleq \{ \widetilde{x}_n^{\rm Pin},$$\widetilde{R}_n, \widetilde{d}_{m, n}, \widetilde{d}_n, \widetilde{f}_n, \widetilde{t}_n, \widetilde{l}_{m, n}, \widetilde{l}_n, \widetilde{v}_n,$$\widetilde{u}_{m, n},$$\widetilde{w}_{m, n}, \widetilde{\iota}_n,$$\widetilde{\tau}_n,$$\widetilde{q}_n \}$, these constraints can be respectively approximated by their first-order Taylor expansions as follows
\begin{subequations}
\begin{flalign}
    &
    G \left( d_{m, n}, \widetilde{d}_{m, n} \right) \geq \left| \left| {\bm \psi}_m^{\rm Pin} - {\bm \psi}_n^{\rm U} \right| \right|^2, \label{con_d_mn_tilde}  \\
    &
    G \left( f_n, \widetilde{f}_n \right)  \geq 2^{R_n} - 1, \label{con_f_n_tilde} \\
    & 
    e^{l_{m, n}} \leq \left( \widetilde{x}_m^{\rm Pin} - x_n^{\rm U} \right)^2 + 2 \left( \widetilde{x}_m^{\rm Pin} - x_n^{\rm U} \right) \left( x_m^{\rm Pin} - \widetilde{x}_m^{\rm Pin} \right) \nonumber \\
    &    ~~~~~~~~~
    + \left( y_m^{\rm Pin} - y_n^{\rm U} \right)^2 + H^2, \label{con_l_mn_tilde} \\
    &  
    \frac{l_n}{2 \alpha} \leq  \sqrt{ \left( \widetilde{x}_n^{\rm Pin} - x_{n, 0}^{\rm Pin}  \right)^2 + \left( y_n^{\rm Pin} - y_{n, 0}^{\rm Pin} \right)^2 } \nonumber \\
    & ~~~~~~~~
    +   \frac{ \left( \widetilde{x}_n^{\rm Pin} - x_{n, 0}^{\rm Pin} \right) \left( x_n^{\rm Pin} - \widetilde{x}_n^{\rm Pin}  \right)   }{ \sqrt{\left( \tilde{x}_n^{\rm Pin} - x_{n, 0}^{\rm Pin}  \right)^2 + \left( y_n^{\rm Pin} - y_{n, 0}^{\rm Pin} \right)^2 } },
    \label{con_l_n_tilde}  \\
    &
    e^{v_n - \sum_{m \neq n} u_{m, n} } + e^{ \iota_n + \tau_n  -\sum_{m \neq n} w_{m, n} } + 1 \leq G\left( \iota_n, \widetilde{\iota}_n \right)  \nonumber \\
    &  + G\left( \tau_n  -\sum\nolimits_{m \neq n} w_{m, n}, \widetilde{\tau}_n - \sum\nolimits_{m \neq n} \widetilde{w}_{m, n}  \right)  + \epsilon, \label{con_rate_outage_tilde}  \\
    &
    \sigma_n^2 e^{f_n + t_n} \leq 
    v_n + \beta G \left( l_{n, n}, \widetilde{l}_{n, n} \right) \nonumber  \\
    &
    + \eta G\left( q_n + t_n - d_n - d_{n, n}, \widetilde{q}_n + \widetilde{t}_n - \widetilde{d}_n - \widetilde{d}_{n, n} \right),  \label{con_v_n_tilde} 
\end{flalign}
\end{subequations}
where $G(x, \widetilde{x}) \triangleq  e^{\widetilde{x}}  ( 1 + x - \widetilde{x} ) $.

Then, Problem ${\bf P}_6$ can be approximated by the following convex optimization problem:
\begin{subequations}
	\begin{flalign}
		{{\bf P}_7}: & \mathop{\max}  \limits_{ \mathcal{L} } \sum\nolimits_{n=1}^N R_n      \\
		{\rm s.t.} ~ & \eqref{p0d}, \eqref{con_d_n}, \eqref{con_iota_n}, \eqref{con_transmit_q}, \eqref{con_R_up_q}, \eqref{con_t_n_q}, \eqref{con_u_mn_q}, \eqref{con_w_mn_q},  
         \nonumber \\
        &
        \eqref{con_tau_n_q}, \eqref{con_d_mn_tilde}, \eqref{con_f_n_tilde}, \eqref{con_l_mn_tilde}, \eqref{con_l_n_tilde}, \eqref{con_rate_outage_tilde}, \eqref{con_v_n_tilde}. \nonumber
	\end{flalign}
\end{subequations}
The SCA technique \cite{sca} is employed to enhance the approximation precision, where the local feasible point $\widetilde{\mathcal{L}}$ is updated within each iteration. For clarity, the proposed SCA-based algorithm is summarized in Algorithm \ref{alg:2}.
 \begin{algorithm}
    \caption{The proposed SCA-based algorithm.} \label{alg:2}
\begin{algorithmic}
  \State{\textbf{Initialization}:}
  \State{~ Set $i:=1$ and obtain a feasible $\widetilde{\mathcal{L}}_0$ to Problem ${\bf P}_6$;}
  \While{\rm the stop criterion is not satisfied}
  \State{Obtain the optimal $\mathcal{L}_i^{\star}$ to Problem ${\bf P}_7$ with $\widetilde{\mathcal{L}}_{i-1}$;}
  \State{Update $\widetilde{\mathcal{L}}_i := \mathcal{L}_i^{\star}$;}
  \State{Update $i:=i+1$;}
  \EndWhile
\end{algorithmic}
\end{algorithm}

\section{Simulation Results}
In this section, we comprehensively evaluate the performance of the proposed PGD-based and SCA-based algorithms in both two-user and multi-user scenarios. Unless otherwise specified, the default simulation parameters are configured as follows: the waveguide height is $H = 3~{\rm m}$, the service region length is $D_{\rm L} = 80~{\rm m}$, and the width of each sub-region is $D_{\rm W}/N= 50~{\rm m}$. The in-waveguide attenuation and blockage coefficients are set to $\alpha = 0.0046$ and $\beta=0.01$, respectively, while the additional NLoS signal attenuation factor is $\kappa^2 = -30~{\rm dB}$. The carrier wavelength is $\lambda = 0.01~{\rm m}$, and the AWGN power is $\sigma_n^2 = -120~{\rm dBm}$. The total transmit power budget is $P_{\max} = 10~{\rm mW}$, and the maximum tolerable outage probability requirement is $\epsilon = 0.01$.

To demonstrate the efficacy of our proposed designs, the following three benchmarks are adopted for comparison:
\begin{itemize}
    \item \emph{Exhaustive search}: This scheme implements a grid search over the feasible region with a resolution of $0.1~{\rm m}$ for antenna deployment and $0.1~{\rm mW}$ for power allocation to approach the globally optimal solution. 
    \item \emph{PA-enabled TDMA}: This scheme serves multiple users via TDMA, where all PAs are dynamically steered toward the currently scheduled user, i.e., $x_m^{\rm Pin} = x_n^{\rm U},~\forall m \in \{1, 2, \ldots, N \}$.
    \item \emph{Conventional TDMA}: This scheme serves multiple users via TDMA, where all PAs are statically deployed at the geographical center of the service region, i.e., $x_m^{\rm Pin} = D_{\rm L}/2,~\forall m \in \{1, 2, \ldots, N\}$.
\end{itemize}

\subsection{Two-user Two-PA Scenario}
This subsection considers the two-user two-PA scenario to 
validate the derived closed-form outage probability expressions in \eqref{closed_u1} and \eqref{phi_n}, and to evaluate the performance of the proposed PGD-based algorithm. Furthermore, the capability of the developed SCA-based algorithm for the general multi-user case is also evaluated by specializing it to the case of $N=2$. 

\begin{figure}
	\begin{center}
	\centerline{\includegraphics[ width=.49\textwidth]{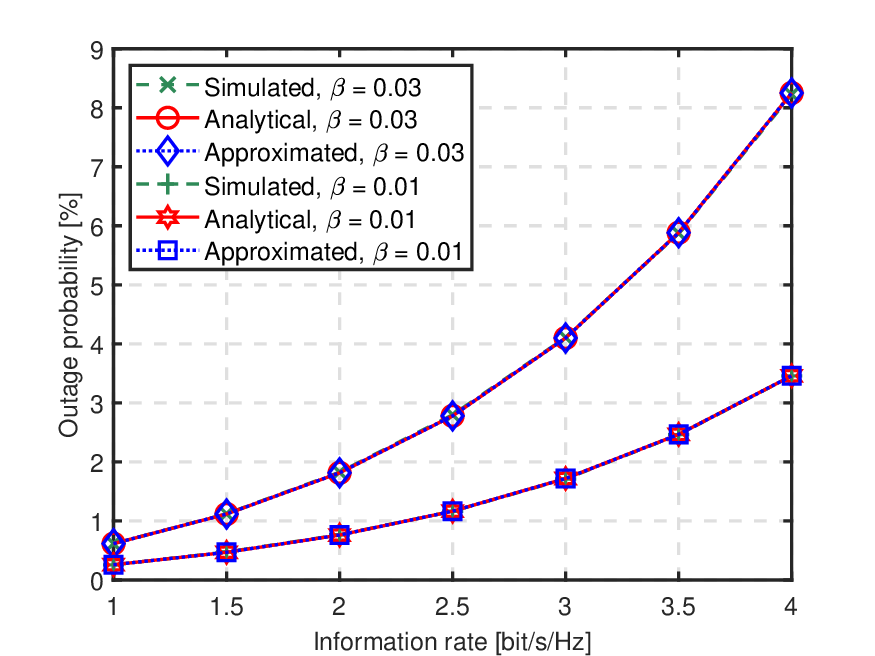}}
    \caption{Comparison of the simulated, analytical, and approximated outage probabilities against the information rate with the blockage coefficient $\beta \in \{0.01, 0.03\}$.}
	\label{fig: prob_sim_ana_appro}
    \vspace{-2em}
	\end{center}
\end{figure}
Fig. \ref{fig: prob_sim_ana_appro} illustrates the outage probability versus the information rate with the blockage coefficient $\beta \in \{0.01, 0.03\}$ under a two-PA case, where the desired PA and interfering PA are deployed $5~{\rm m}$ and $40~{\rm m}$ away from the user with transmit powers of $10~{\rm mW}$ and $5~{\rm mW}$, respectively. The simulated, analytical, and approximated outage probabilities are derived via extensive Monte Carlo realizations, the closed-form expression in equation \eqref{closed_u1}, and the approximation in equation \eqref{phi_n} for the general multi-user case specialized with $N=2$, respectively. One can find that the outage probability increases with the information rate threshold. Moreover, a larger $\beta$ significantly elevates the outage probability, since harsher blockage environments severely disrupt the desired link to be LoS. Notably, there exists a perfect match among the simulated, analytical, and approximated outage probabilities, which not only validates the accuracy of the derived closed-form expression but also demonstrates the tightness of the approximation employed for the general case.

\begin{figure}
	\begin{center}
	\centerline{\includegraphics[ width=.49\textwidth]{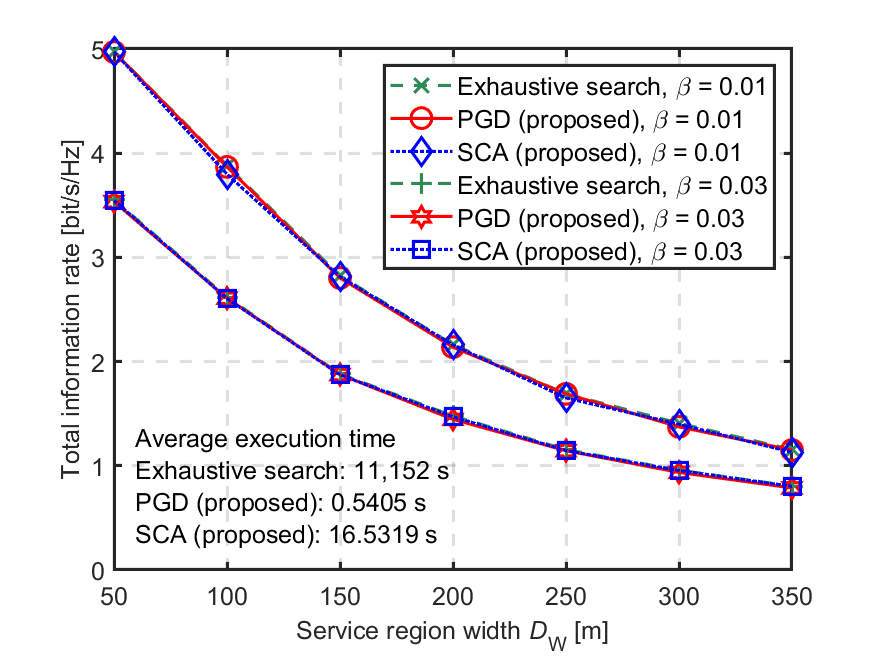}}
    \caption{Comparison of the exhaustive search, PGD-based, and SCA-based algorithms in terms of the total information rate versus the service region width $D_{\rm W}$ with the blockage coefficient $\beta \in \{0.01, 0.03\}$.}
	\label{fig: pgd_sca_exhau}
    \vspace{-2em}
	\end{center}
\end{figure}
Fig. \ref{fig: pgd_sca_exhau} compares the exhaustive search, PGD-based, and SCA-based algorithms in terms of the total information rate versus the service region width $D_{\rm W}$ with the blockage coefficient $\beta \in \{0.01, 0.03\}$. As observed, a larger $\beta$ degrades the total information rate due to the heightened rate outage probability, which aligns with Fig. \ref{fig: prob_sim_ana_appro}. Furthermore, the total information rate obtained by all algorithms  declines as $D_{\rm W}$ expands due to the increased propagation loss and diminished LoS probabilities for the desired links. It is noted that both the proposed PGD-based and SCA-based algorithms achieve near-optimal performance that closely approaches the performance of the exhaustive search, thereby validating their efficacy. Moreover, the proposed PGD-based algorithm yields the lowest average execution time owing to the precise boundary analysis tailored for the two-user two-PA case.

\begin{figure}
	\begin{center}
	\centerline{\includegraphics[ width=.49\textwidth]{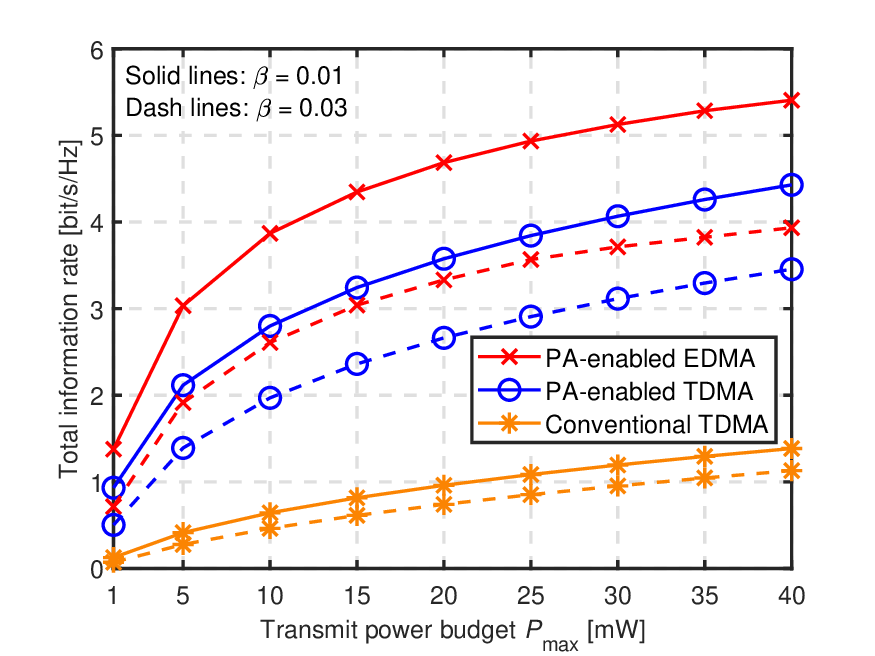}}
    \caption{Comparison of the proposed PA-enabled EDMA design, PA-enabled TDMA design, and conventional TDMA design in terms of the total information rate versus the transmit power budget $P_{\max}$ with the blockage coefficient $\beta \in \{0.01, 0.03\}$.}
	\label{fig: emda_tdma}
    \vspace{-2em}
	\end{center}
\end{figure}
Fig. \ref{fig: emda_tdma} compares the proposed PA-enabled EDMA design, PA-enabled TDMA design, and conventional TDMA design in terms of the total information rate versus the transmit power budget $P_{\max}$ with the blockage coefficient $\beta \in \{0.01, 0.03\}$. It is observed that the total information rate obtained by all designs increases with $P_{\max}$. Moreover, the PA-enabled TDMA design consistently outperforms the conventional TDMA design by dynamically deploying PAs to approach the scheduled user, thereby reducing the path loss and establishing high-quality LoS links. Furthermore, the proposed PA-enabled EDMA design achieves the highest total information rate among all considered designs. The underlying reason is that EDMA can constructively exploit the inherent environmental blockages and user spatial diversity to secure strong LoS paths for desired links, while simultaneously inducing the interfering links into NLoS states. Since the resulting NLoS interfering links are significantly weaker than the LoS desired links, the proposed PA-enabled EDMA design can concurrently serve multiple users in a quasi-interference-free manner.

\subsection{Multi-user Multi-PA Scenario}
In this subsection, we evaluate the capability of the proposed PA-enabled EDMA design under a general multi-user scenario with $N=4$, where the corresponding optimization is implemented by the proposed SCA-based algorithm. 

\begin{figure}
	\begin{center}
	\centerline{\includegraphics[ width=.49\textwidth]{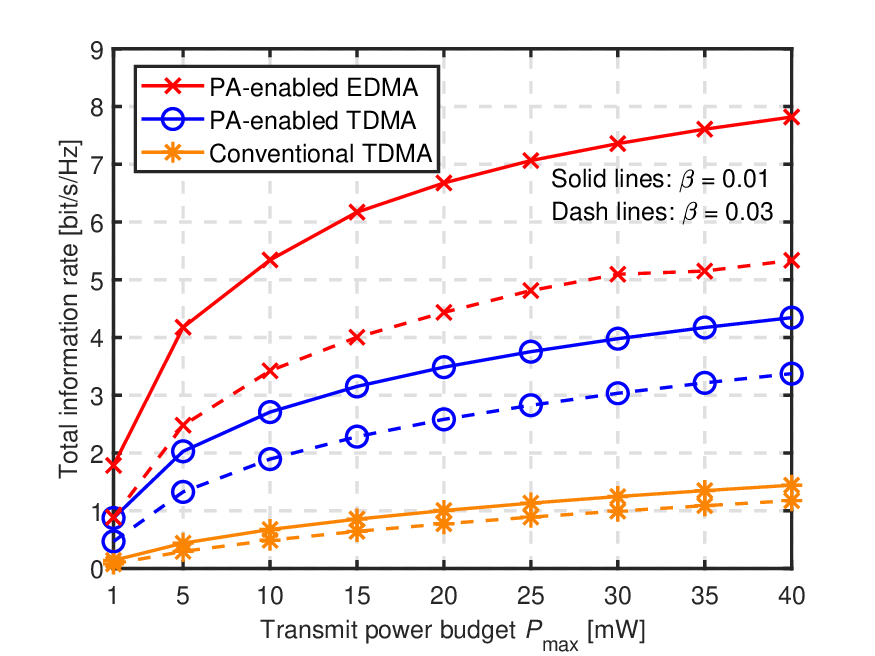}}
    \caption{Impact of the transmit power budget $P_{\max}$ on the PA-enabled EDMA design with the blockage coefficient $\beta \in \{0.01, 0.03\}$.}
	\label{fig: P_max_n}
    \vspace{-2em}
	\end{center}
\end{figure}
Fig. \ref{fig: P_max_n} evaluates the impact of the transmit power budget $P_{\max}$ on the total information rate of the PA-enabled EDMA design with the blockage coefficient $\beta \in \{0.01, 0.03\}$, where the PA-enabled TDMA and conventional TDMA designs are presented for comparison. As observed, the total information rate increases with $P_{\max}$, and the EDMA design achieves the best performance. By comparing Fig. \ref{fig: P_max_n} with Fig. \ref{fig: emda_tdma}, one can find that the performance gain of EDMA over TDMA becomes significantly more pronounced as the number of users increases from $N=2$ to $N=4$. This phenomenon highlights the advantages of EDMA in the multi-user case. Specifically, a larger user cluster inherently offers richer spatial diversity in terms of geographic distribution. EDMA can effectively utilize this heightened spatial diversity to facilitate quasi-interference-free concurrent transmissions by flexibly deploying PAs, while TDMA fails to exploit the spatial distribution of users and suffers from the inherent limitations of time division as the system scales.

\begin{figure}
	\begin{center}
	\centerline{\includegraphics[ width=.49\textwidth]{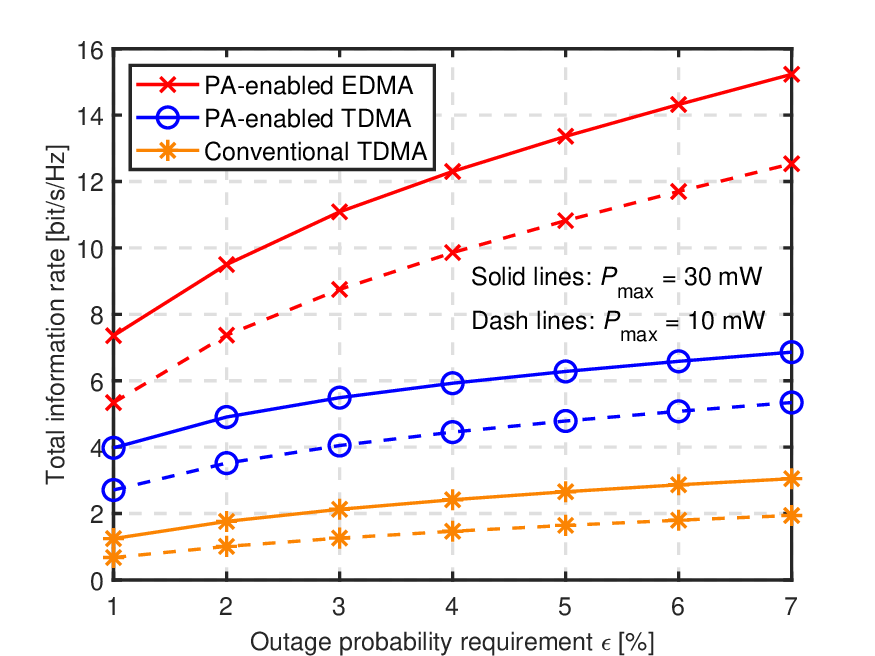}}
    \caption{Impact of the outage probability requirement $\epsilon$ on the PA-enabled EDMA design with the transmit power budget $P_{\max} \in \{10~{\rm mW}, 30~{\rm mW} \}$.}
	\label{fig: eps_n}
    \vspace{-2em}
	\end{center}
\end{figure}
Fig. \ref{fig: eps_n} evaluates the impact of the outage probability requirement $\epsilon$ on the total information rate of the PA-enabled EDMA design with the transmit power budget $P_{\max} \in \{10~{\rm mW}, 30~{\rm mW}\}$, where PA-enabled TDMA and conventional TDMA designs are presented as benchmarks. It is observed that the total information rate obtained by all designs increases with $\epsilon$ as the outage constraints are systematically relaxed. Moreover, the proposed EDMA design outperforms the other two benchmarks, and the performance gap expands as $\epsilon$ grows. The reason is that a larger $\epsilon$ broadens the optimization feasible region. EDMA can fully exploit this expanded design space to obtain synergistic gains via the joint optimization of PA positioning and power allocation adaptive to the spatial distribution of users, while such multidimensional flexibility is inherently missing in TDMA designs.

\begin{figure}
	\begin{center}
	\centerline{\includegraphics[ width=.49\textwidth]{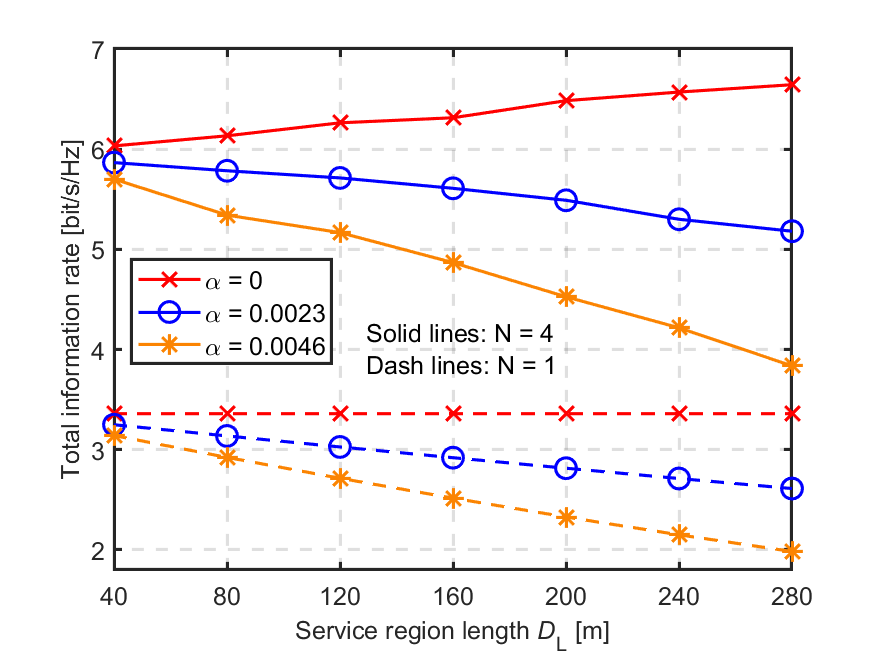}}
    \caption{Impact of the service region length $D_{\rm L}$ on the PA-enabled EDMA design with the in-waveguide attenuation coefficient $\alpha \in \{0, 0.0023, 0.0046\}$ and the number of users $N \in \{1, 4\}$.}
	\label{fig: DL_n}
    \vspace{-2em}
	\end{center}
\end{figure}
Fig. \ref{fig: DL_n} evaluates the impact of the service region length $D_{\rm L}$ on the total information rate of the PA-enabled EDMA design with the in-waveguide attenuation coefficient $\alpha \in \{0, 0.0023, 0.0046\}$ and the number of users $N \in \{1, 4\}$. One can find that for the single-user lossless scenario ($\alpha = 0$ and $N=1$), the total information rate remains invariant with $D_{\rm L}$. The reason is that the absence of in-waveguide attenuation enables PAs to flexibly adapt their lateral positions within the waveguide to losslessly track the user. Furthermore, in the multi-user lossless scenario ($\alpha = 0$ and $N=4$), the total information rate increases with $D_{\rm L}$ since a larger service region can provide richer diversity of user distribution, which can be employed by EDMA to mitigate inter-user interference. On the contrary, when the in-waveguide attenuation is non-negligible ($\alpha \in \{0.0023, 0.0046\}$), the total information rate decreases with $D_{\rm L}$, as the signal degradation caused by in-waveguide attenuation outweighs the spatial multi-user isolation gains offered by the expanded deployment region.

\section{Conclusion}
This paper investigated the joint design of PA deployment and power allocation in PA-enabled outage-constrained EDMA systems. To capture practical propagation environment characteristics, probabilistic LoS blockages, random NLoS scattering, and in-waveguide attenuation were considered in the model formulation. We maximized the total transmission rate under the constraints of statistical rate outage, transmit power budget, and feasible deployment region of PAs. First, we considered a special case involving two users and two PAs, and derived the corresponding closed-form expressions for outage probabilities, followed by a PGD-based algorithm to handle the reformulated problem. Then, we extended our design to the general multi-user multi-PA scenarios and derived the tractable approximations for outage probabilities, where the resulting problem is addressed by a proposed SCA-based algorithm. Simulations verified that both PGD-based and SCA-based algorithms achieved the near-optimal performance in comparison with exhaustive search. Moreover, there existed significant performance gains of the proposed EDMA design over TDMA designs.



\begin{thebibliography}{1}
\bibitem{bc_6g}
C. -X. Wang et al., ``On the road to 6G: Visions, requirements, key technologies, and testbeds," \emph{IEEE Commun. Surveys Tuts.}, vol. 25, no. 2, pp. 905-974, 2nd Quart., 2023.

\bibitem{bc_spectrum}
B. He et al., ``Toward secure semantic transmission in the era of GenAI: A diffusion-based framework," \emph{IEEE Commun. Mag.}, early access, 2025.

\bibitem{bc_flex}
W. Xiong et al., ``Secure analog beamforming for multi-user MISO systems with movable antennas," \emph{IEEE Trans. Wireless Commun.}, vol. 25, pp. 8164-8178, 2026.

\bibitem{bc_fa}
K.-K. Wong, A. Shojaeifard, K.-F. Tong, and Y. Zhang, ``Fluid antenna systems,” \emph{IEEE Trans. Wireless Commun.}, vol. 20, no. 3, pp. 1950–1962,
Mar. 2021.

\bibitem{bc_ma}
L. Zhu, W. Ma, and R. Zhang, ``Modeling and performance analysis for movable antenna enabled wireless communications,” \emph{IEEE Trans. Wireless Commun.}, vol. 23, no. 6, pp. 6234–6250, Jun. 2024.

\bibitem{bc_ma2}
C. Zhou et al., ``Frequency-switching array enhanced physical-layer security in terahertz bands: A movable antenna perspective," \emph{arXiv preprint arXiv:2507.01624}, 2025.


\bibitem{bc_pa1}
Y. Xu, D. Xu, X. Yu, S. Song, Z. Ding, and R. Schober, ``Joint radiation power, antenna position, and beamforming optimization for pinching-antenna systems with motion power consumption," \emph{IEEE Trans. Wireless Commun.}, vol. 25, pp. 7825-7841, 2026.

\bibitem{bc_pa2}
Z. Ding, R. Schober, and H. V. Poor, ``Flexible-antenna systems: A pinching-antenna perspective,” \emph{IEEE Trans. Commun.}, vol. 73, no. 10, pp. 9236–9253, Oct. 2025.

\bibitem{bc_pa3}
Y. Zhao et al., ``Reconfigurable antennas for next-generation mobile communication networks: A comprehensive survey and tutorial," \emph{IEEE Commun. Surveys Tuts.}, vol. 28, pp. 5267-5306, 2026.


\bibitem{bc_pa_pl}
X. Fan and L. Peng, ``Toward flexible and intelligent indoor NFC: The role of pinching antennas in 6G networks," \emph{IEEE Netw.}, vol. 39, no. 6, pp. 70-77, Nov. 2025.

\bibitem{bc_pa_los}
Y. Zhong, Y. Xiao, Y. Li, H. Chen, X. Lei, and P. Fan, ``2-D pinching-antenna systems: Modeling and beamforming design," \emph{IEEE Trans. Commun.}, vol. 74, pp. 8254-8266, 2026.


\bibitem{rw_pa_sing_w1}
C. Ouyang, Z. Wang, Y. Liu, H. Shin, and Z. Ding, ``Capacity characterization of pinching-antenna systems," \emph{IEEE Trans. Wireless Commun.}, vol. 25, pp. 10387-10404, 2026.

\bibitem{rw_pa_sing_w2}
Y. Xu, Z. Ding, D. Cai, and V. W. S. Wong, ``QoS-aware NOMA design for downlink pinching-antenna systems," \emph{IEEE Trans. Commun.}, vol. 73, no. 12, pp. 13611-13625, Dec. 2025.

\bibitem{rw_pa_mul_w1}
J. Zhao, X. Mu, K. Cai, Y. Zhu, and Y. Liu, ``Waveguide division multiple access for pinching-antenna systems (PASS)," \emph{IEEE Trans. Wireless Commun.}, vol. 25, pp. 13761-13775, 2026.

\bibitem{rw_pa_mul_w2}
K. Wang, Z. Ding, and G. K. Karagiannidis, ``Antenna activation and resource allocation in multi-waveguide pinching-antenna systems," \emph{IEEE Trans. Wireless Commun.}, vol. 25, pp. 4070-4082, 2026.

\bibitem{rw_pa_mul_f1}
W. Mao, Y. Lu, Y. Xu, B. Ai, O. A. Dobre, and D. Niyato, ``Multi-waveguide pinching antennas for ISAC," \emph{IEEE Trans. Wireless Commun.}, vol. 25, pp. 5846-5858, 2026.

\bibitem{rw_pa_mul_f2}
Y. Ai, X. Mu, P. Si, and Y. Liu, ``Delay minimization in pinching-antenna-enabled NOMA-MEC networks," \emph{IEEE Commun. Lett.}, vol. 30, pp. 962-966, 2026.

\bibitem{bc_pro_pa1}
Y. Xu, Y. Lu, Z. Ding, and T. -H. Chang, ``Pinching-antenna system design under random LoS and NLoS channels," \emph{IEEE Trans. Wireless Commun.}, vol. 25, pp. 18440-18455, 2026.

\bibitem{bc_pro_los1}
X. Lan, X. Tang, R. Zhang, Q. Du, and T. Q. S. Quek, ``Time-slotted multi-cluster UAV airComp with energy-awareness: A pointer network-assisted soft actor-critic learning framework," \emph{IEEE J. Sel. Topics Signal Process.}, early access, 2026.

\bibitem{bc_pro_los2}
X. Lan et al., ``UAV-assisted integrated communication and over-the-air computation with interference awareness," \emph{IEEE Trans. Commun.}, vol. 73, no. 11, pp. 10647-10661, Nov. 2025.

\bibitem{bc_pro_pa2}
Y. Xu et al., ``Generalized pinching-antenna systems: A tutorial on principles, design strategies, and future directions," \emph{IEEE Commun. Surveys Tuts.}, vol. 28, pp. 5872-5908, 2026.

\bibitem{bc_pro_pa3}
Y. Xu et al., ``Environment-aware network-level design of generalized pinching-antenna systems—part I: Traffic-aware case," \emph{arXiv preprint arXiv:2602.17023}, 2026.

\bibitem{rw_deter_pa1}
X. Xie, F. Fang, Z. Ding, and X. Wang, ``Pinching antennas in blockage-aware environments: Modeling, design, and optimization," \emph{arXiv preprint arXiv:2601.01277}, 2026.

\bibitem{rw_deter_pa2}
Y. Xu et al., ``Environment-aware network-level design of generalized pinching-antenna systems–part II: Geometry-aware case," \emph{arXiv preprint arXiv:2602.17032}, 2026.

\bibitem{los_block}
Z. Ding, R. Schober, and H. V. Poor, ``Environment division multiple access (EDMA): A feasibility study via pinching antennas,"  \emph{IEEE Trans. Wireless Commun.}, vol. 25, pp. 15675-15691, 2026.



\bibitem{neff}
D. M. Pozar, \emph{Microwave Engineering}, 4th ed. Wiley, New York, US, 1998.

\bibitem{in_wave_att}
J. F. Bauters, M. J. Heck, D. D. John, J. S. Barton, C. M. Bruinink, A. Leinse, R. G. Heideman, D. J. Blumenthal, and J. E. Bowers, ``Planar waveguides with less than 0.1 dB/m propagation loss fabricated with wafer bonding,” \emph{Opt. Express}, vol. 19, no. 24, pp. 24 090–24 101, 2011.

\bibitem{eta_para}
C. Zhou, C. You, S. Shi, J. Zhou, and C. Wu, ``Super-resolution wideband beam training for near-field communications with ultralow overhead," \emph{IEEE Internet Things J.}, vol. 12, no. 14, pp. 28793-28808, Jul. 2025.


\bibitem{p_LoS_channel}
C. You and R. Zhang, ``Hybrid offline-online design for UAV-enabled data harvesting in probabilistic LoS channels," \emph{IEEE Trans. Wireless Commun.}, vol. 19, no. 6, pp. 3753-3768, Jun. 2020.

\bibitem{p_LoS_channel2}
3rd Generation Partnership Project (3GPP), ``Study on channel model for frequencies from 0.5 to 100 GHz (Release 18),” Dec. 2024.

\bibitem{implicit}
S. G. Krantz and H. R. Parks, \emph{The Implicit Function Theorem: History, Theory, and Applications}. Boston, MA, USA: Birkhauser, 2002.


\bibitem{hypo-exp}
S. M. Ross, \emph{Introduction to Probability Models}, 9th ed. Amsterdam, The Netherlands: Academic Press, 2007.

\bibitem{sca}
W. Mao et al., ``Cramér–Rao bound optimization for bistatic ISAC: Transceiver design and attention-based ISACNet," \emph{IEEE J. Sel. Areas Commun.}, vol. 44, pp. 181-195, 2026.



\end{thebibliography}
\end{document}